\newcommand{\mI}[0]{\textnormal{I}}
\newcommand{\mJ}[0]{{J}}
\newcommand{\mA}[0]{\textnormal{A}}
\newcommand{\mP}[0]{{P}}
\newcommand{\mcO}[0]{\mathcal{O}}
\newcommand{\dist}[0]{\mathrm{d}}
\newcommand{\BR}{{\mathbb{R}}}
\newcommand{\BE}{{\mathbb{E}}}
\newcommand{\diag}{\mathrm{diag}}
\newcommand{\data}[0]{y}
\newcommand{\bsigma}[0]{\boldsymbol{\sigma}}
\newcommand{\net}[0]{x(\params)}
\newcommand{\netmap}[0]{x(\hat \params)}
\newcommand{\op}[0]{\mA}
\newcommand{\params}[0]{\theta}
\newcommand{\TV}[0]{{\rm TV}}
\providecommand{\argmin}{\operatorname*{argmin}}
\newcommand{\spaced}[1]{\quad \text{#1} \quad}
\theoremstyle{plain}
\newtheorem{theorem}{Theorem}[section]
\newtheorem{proposition}[theorem]{Proposition}
\newtheorem{lemma}[theorem]{Lemma}
\theoremstyle{definition}
\theoremstyle{remark}
\begin{document}
\title{Uncertainty Estimation for Computed Tomography with a Linearised Deep Image Prior}

\author{Javier~Antorán$^\dagger$,~Riccardo~Barbano$^\dagger$,~Johannes~Leuschner,~José~Miguel~Hernández-Lobato,~Bangti~Jin%
\IEEEcompsocitemizethanks{\IEEEcompsocthanksitem JA and JMHL are from the Department of Engineering, University of Cambridge, United Kingdom.\protect\\
E-mail: ja666@cam.ac.uk
\IEEEcompsocthanksitem RB is from the Department of Computer Science, University College London, United Kingdom\protect\\
E-mail: riccardo.barbano.19@ucl.ac.uk
\IEEEcompsocthanksitem JL is from the Center for Industrial Mathematics, University of Bremen, Germany.\protect
\IEEEcompsocthanksitem BJ is from the Department of Mathematics, The Chinese University of Hong Kong, Shatin, N. T., Hong Kong, P. R. China.\protect\\
$^\dagger$ equal contribution.\protect
}%
}

\IEEEtitleabstractindextext{%
\begin{abstract}
\justifying%
Existing deep-learning based tomographic image reconstruction methods do not provide accurate estimates of reconstruction uncertainty, hindering their real-world deployment.
This paper develops a method, termed as the linearised deep image prior (DIP), to estimate the uncertainty associated with reconstructions produced by the DIP with total variation regularisation (TV).
Specifically, we endow the DIP with conjugate Gaussian-linear model type error-bars computed from a local linearisation of the neural network around its optimised parameters. To preserve conjugacy, we approximate the TV regulariser with a Gaussian surrogate.
This approach provides pixel-wise uncertainty estimates and a marginal likelihood objective for hyperparameter optimisation.
We demonstrate the method on synthetic data and real-measured high-resolution 2D $\mu$CT data, and show that it provides superior calibration of uncertainty estimates relative to previous probabilistic formulations of the~DIP.
Our code is available at \url{https://github.com/educating-dip/bayes_dip}.
\end{abstract}

\begin{IEEEkeywords}
Computational Tomography, Uncertainty Estimation, Linearised Neural Networks, Bayesian Neural Networks
\end{IEEEkeywords}}

\maketitle

\IEEEdisplaynontitleabstractindextext
\IEEEpeerreviewmaketitle

\IEEEraisesectionheading{\section{Introduction}\label{sec:introduction}}

\IEEEPARstart{I}{}nverse problems in imaging aim to recover
an unknown image $x \in\BR^{d_{x}}$ from the noisy measurement $\data \in \BR^{d_{y}}$
\begin{equation}\label{eq:inverse_problem}
    \data = \op x + \eta,
\end{equation}
where $\op \in \BR^{d_{y} \times d_{x}}$ is a linear forward operator, and $\eta$ i.i.d.\ noise. We assume Gaussian noise $\eta \sim \mathcal{N}( 0, \sigma_{y}^{2} \mI)$. Many tomographic reconstruction problems take this form, e.g. computed tomography (CT).
Due to the inherent ill-posedness of the reconstruction task, e.g.\ $d_{y} \ll d_{x}$, 
suitable regularisation, or prior specification, is crucial for the successful recovery of $x$ \cite{tikhonov1977solutions, engl1996regularization, ito2014inverse}. 

In recent years, deep-learning based approaches have achieved outstanding performance on a wide variety of tomographic problems \cite{arridge2019solving, ongie2020deep, WangYe:2020}. 
Most deep learning methods are supervised; they rely on large volumes of paired training data. Alas, these often fail to generalise out-of-distribution \cite{antun2020instabilities}; small deviations from the distribution of the training data can lead to severe reconstruction artefacts. Pathologies of this sort motivate the need for more reliable alternatives \cite{WangYe:2020}. This work takes steps in this direction by exploring the intersection of unsupervised deep learning---not dependent on training data and thus mitigating hallucinatory artefacts \cite{BoraJalal:2017,HeckelHand:2019,Tolle2021meanfield}---and uncertainty quantification for tomographic reconstruction \cite{kompa2021second,Vasconcelos2022UncertaINR}.

\begin{figure}[t]
    \centering
    \includegraphics[width=0.9\columnwidth]{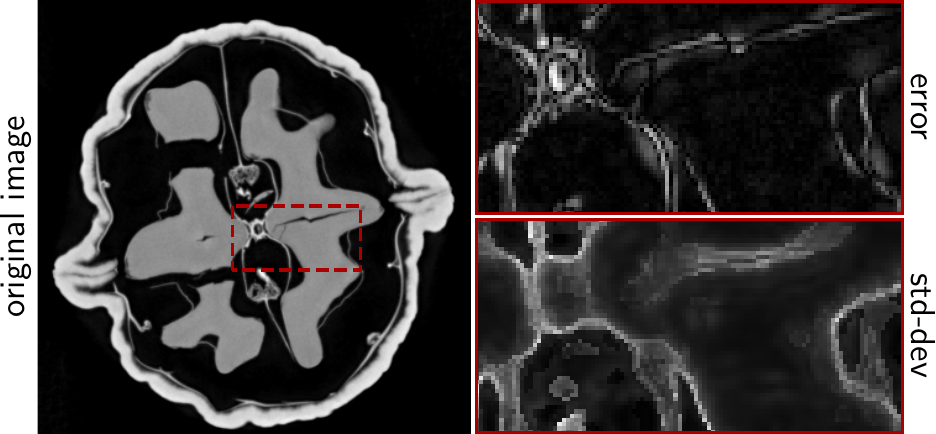}
    \caption{X-ray reconstruction ($501{\times}501\, \text{px}^2$) of a walnut (left), the absolute error of its CT reconstruction (top) and pixel-wise uncertainty (bottom).}
    \vspace{-0.5cm}
    \label{fig:walnut_mini}
\end{figure}

We focus on the deep image prior (DIP), perhaps the most widely adopted \cite{Ulyanov:2018} unsupervised deep learning approach.
DIP regularises the reconstructed image  $\hat x$ by reparametrising it as the output of a deep convolutional neural network (CNN). It relies solely on structural biases induced by the CNN architecture, and does not require paired training data. 
This idea has proven effective on tasks ranging from denoising and deblurring to more challenging tomographic reconstructions \cite{Ulyanov:2018,LiuSunXuKamilov:2019,baguer2020diptv,KnoppGrosser:2022,DarestaniHeckel:2021,GongLi:2019,cui2021populational,BarutcuGursoyKatsaggelos:2022}.
Nonetheless, the DIP only provides point reconstructions, and does not give uncertainty~estimates.

In this work, we equip DIP reconstructions with reliable uncertainty estimates. Distinctly from previous probabilistic formulations of the DIP \cite{Cheng2019Bayesian,Tolle2021meanfield}, we only estimate the uncertainty associated with a specific reconstruction, instead of trying to characterise a full posterior distribution over all candidate images. We achieve this by computing Gaussian-linear model type error-bars for a local linearisation of the DIP around its optimised reconstruction \cite{Mackay1992Thesis,Khan19approximate,Immer2021improving}.  Henceforth, we will refer to our method as \textit{linearised DIP}. Linearised approaches have recently been shown to provide state-of-the-art uncertainty estimates for supervised deep learning models \cite{daxberger2021bayesian}.
Unfortunately, the total variation (TV) regulariser, ubiquitous in CT reconstruction, makes inference in the linearised DIP intractable and it does not lend itself to standard Laplace (i.e. local Gaussian) approximations \cite{helin2022edgepromoting}. We tackle this issue by using the predictive complexity prior (PredCP) framework  \cite{nalisnick2021predicitve} to construct covariance kernels that induce properties similar to those of the TV prior while preserving Gaussian-linear conjugacy.
Finally, we discuss a number of computational techniques that allow us to scale the proposed method to large DIP networks and high-resolution 2D reconstructions.

Empirically, our method's pixel-wise uncertainty estimates predict reconstruction errors more accurately than existing approaches to uncertainty estimation with the DIP, cf.~\cref{fig:main_kmnist}.
This is not at the expense of accuracy in reconstruction: the reconstruction obtained using the standard regularised DIP method \cite{baguer2020diptv} is preserved as the predictive mean, ensuring compatibility with advancements in DIP research.
We demonstrate our approach on high-resolution CT reconstructions of real-measured 2D $\mu$CT projection data, cf.~\cref{fig:walnut_mini}.

The contributions of this work can be summarised as follows.
\begin{itemize}
    \item We propose a novel approach to bestow reconstructions from the TV-regularised-DIP with uncertainty estimates. Specifically, we construct a local linear model by linearising the DIP around its optimised reconstruction and provide this model's error-bars as a surrogate for those of the DIP. 
    \item We detail an efficient implementation of our method, scaling up to real-measured high-resolution $\mu$CT data. In this setting, our method provides by far more accurate uncertainty estimation than previous probabilistic formulations of the DIP reconstruction.
\end{itemize}

The rest of this paper is organised as follows. 
Related work is presented in \cref{sec:related_work}. \Cref{sec:background} introduces preliminaries necessary for our subsequent discussion of the linearised DIP. 
\Cref{sec:TV_prior} discusses how to design a tractable Gaussian prior based on the TV semi-norm for Bayesian inversion of ill-posed imaging problems.
\Cref{sec:probabilistic_model} and \cref{sec:computations} present the linearised DIP along with its efficient implementation.
\Cref{sec:experiments} reports our experimental investigations carried out on synthetic and real-measured high-resolution $\mu$CT data. \Cref{sec:conclusion} concludes the article. The detailed derivations and additional experimental results are given in the supplementary material (SM).

\section{Related Work}\label{sec:related_work}

\subsection{Advances in the deep image prior}

Since its introduction by \cite{Ulyanov:2018,ulyanov2020dip}, the DIP has been improved with early stopping \cite{Wang2021early}, TV regularisation \cite{LiuSunXuKamilov:2019,baguer2020diptv} and pretraining  \cite{barbano2021deep,KnoppGrosser:2022}.
We build upon these recent advancements by providing a scalable method to estimate the error-bars of TV-regularised-DIP reconstructions. 

Obtaining reliable uncertainty estimates for DIP reconstructions is a relatively unexplored topic.
Building upon \cite{aga2018cnngp} and \cite{Novak19CNN}, \cite{Cheng2019Bayesian} show that in the infinite-channel limit, the DIP converges to a Gaussian Process (GP). In the finite-channel regime, the authors
approximate the posterior distribution over DIP weights with Stochastic Gradient Langevin Dynamics (SGLD) \cite{Welling11Langevin}.
\cite{laves2020uncertainty} and \cite{Tolle2021meanfield} use factorised Gaussian variational inference \cite{Blundell15weight} and MC dropout \cite{hron18dropout,Vasconcelos2022UncertaINR}, respectively. 
These probabilistic treatments of DIP primarily aim to prevent overfitting, as opposed to accurately estimating uncertainty.
While they can deliver uncertainty estimates, their quality tends to be poor.
In fact, obtaining reliable uncertainty estimates from deep-learning based approaches, like DIP, largely remains a challenging open problem \cite{Snoek19can, ashukha2020pitfalls,Foong20Approximate, antoran2020depth,BarbanoArridgeJinTanno:2021}.
In the present work, we perform uncertainty estimation by performing Bayesian inference with respect to the DIP model locally linearised around its optimised parameters. This is distinct from the aforementioned approaches in that we only model a local mode of the posterior distribution.

\subsection{Bayesian inference in linearised neural networks}

The Laplace method is first applied to deep learning in \cite{Mackay1992Thesis}. It has seen a recent popularisation as the best performing approach when it comes to Bayesian reasoning with neural networks \cite{daxberger2021bayesian,Daxberger2021redux}. Specifically,\cite{Khan19approximate} and \cite{Immer2021improving} show that the linearization step improves the quality of uncertainty estimates. \cite{Immer21Selection}, \cite{antoran2021fixing} and \cite{antoran2022Adapting} explore the linear model's evidence for model selection. \cite{daxberger2021bayesian} and \cite{Maddox2021fast} introduce subnetwork and finite differences approaches, respectively, for scalable inference with linearised models. Inference in the linearised model is highly attractive compared to alternative approaches because it is post-hoc and it preserves the reconstruction obtained through DIP optimisation as the predictive mean.  
This line of work is also related to the neural tangent kernel \cite{Jacot2018NTK, Lee2019wide, Novak2020neuraltangents}, in which NNs are linearised at initialisation. 

\section{Preliminaries}\label{sec:background}

\subsection{Total variation regularisation}

The imaging problem given in \cref{eq:inverse_problem} admits multiple solutions consistent with the observation $\data$. Thus, regularisation is needed for stable reconstruction.
Total variation (TV) is perhaps the most well established  regulariser \cite{rudin1992nonlinear,chambolle2010introduction}. The anisotropic TV semi-norm of an image vector $x \in \BR^{d_{x}}$ imposes an $L^1$ constraint on image gradients,
\begin{align}\label{eq:TV_equation}
    \text{TV}(x) \!=\! \sum_{i,j} | X_{i,j} - X_{i+1,j}| + \sum_{i,j} |X_{i,j} - X_{i,j+1}|,
\end{align}
where  $X \in \BR^{h \times w}$ denotes the vector $x$ reshaped into an image of height $h$ by width $w$, and $d_{x} = h\cdot w$.
This leads to the regularised reconstruction formulation
\begin{equation}\label{eq:simple_optimisation_objective}
\hat{x} \in\argmin_{x\in\mathbb{R}^{d_x}} \|\op x - \data\|^2_2 + \lambda\mathrm{TV}(x),
\end{equation}
where the hyperparameter $\lambda>0$ determines the strength of the regularisation relative to the reconstruction term.

\subsection{Bayesian inference for inverse problems}

The probabilistic framework provides a consistent approach to uncertainty estimation in ill-posed imaging problems \cite{KaipioSomersalo:2005,Stuart,Seeger11Sparse}. In this framework, the image to be reconstructed is treated as a random variable.
Instead of finding a single best reconstruction $\hat{x}$, we aim to find a posterior distribution $p(x | \data)$ that scores every candidate image $x\in\BR^{d_{x}}$ according to its agreement with our observations $\data$ and prior beliefs $p(x)$. Under this view, the regularised objective in \cref{eq:simple_optimisation_objective} can be understood as the negative log of an unnormalised posterior, i.e. $p(x | \data) \!\propto\! \exp(- \mathcal{L}(x))$, and $\hat x$ as its mode, i.e. the \textit{maximum a posteriori} estimate. Specifically, the least squares reconstruction loss corresponds to the negative log of a Gaussian likelihood  $p(\data | x) = \mathcal{N}(\data; \op x, \mI)$ and the TV regulariser to the negative log of a prior density over reconstructions $p(x) \propto \exp(-\lambda\TV(x))$.

The aforementioned posterior is obtained by updating the prior over images with the likelihood as
\begin{gather}\label{eq:bayes_rule}
    p(x | \data) = p(\data)^{-1} p(\data | x)p(x),
\end{gather}
for $p(\data)= \int p(\data | x) p(x) {\rm d} x$ the normalising constant, also know as the marginal likelihood (MLL). This later quantity provides an objective for optimising hyperparameters, such as the regularisation strength $\lambda$.
The discrepancy among plausible reconstructions in the posterior indicates uncertainty.

This work partially departs from the established framework in that it \emph{solely concerns itself with characterising plausible reconstructions around the mode $\hat{x}$} \cite{Mackay1992Thesis}. 
This has two key advantages, i) when using NN models, the likelihood becomes a strongly multi-modal function, which in turn makes the full Bayesian update intractable. On the other hand, the posterior for the linearised model is quadratic; ii) even if we could obtain the full Bayesian posterior, downstream stakeholders without expertise in probability are likely to have little use for it. Instead, a single reconstruction together with its pixel-wise uncertainty may be more interpretable to end-users.

\subsection{The Deep Image Prior (DIP)} \label{subsec:preliminaries_DIP}

The DIP \cite{Ulyanov:2018,ulyanov2020dip} reparametrises the reconstructed image as the output of a CNN $x(\params)$ with a fixed input, which we omit from our notation for clarity, and learnable parameters $\params \in\mathbb{R}^{d_{\theta}}$ (we overload $x$ to refer to a point in $\BR^{d_{x}}$ and the injection $x: \,\BR^{d_{\theta}} \to \BR^{d_{x}}$ given by the DIP neural network).
This has been found to provide a favourable loss landscape for optimisation \cite{Shi2022Spectral}.
Penalising the TV of the DIP's output avoids the need for early stopping and improves reconstruction fidelity \cite{LiuSunXuKamilov:2019,baguer2020diptv}. 
The resulting optimisation problem is
\begin{equation}\label{eq:DIP_MAP-obj2}
    \hat \params \in\argmin_{\params\in\mathbb{R}^{d_\theta}} \|\op \net - \data\|_2^2 + \lambda \TV(\net),
\end{equation}
and the recovered image is given by $\hat x =\netmap$.
U-Net is the standard choice of CNN architecture \cite{ronneberger2015u}.  The DIP parameters must be optimised separately for each new measurement $\data$. Fortunately, the cost of this optimisation can be greatly lessened by pretraining \cite{barbano2021deep,KnoppGrosser:2022}.

\subsection{Bayesian inference with linearised neural networks}

Adopting an NN parametrisation of the reconstructed image, as in \cref{subsec:preliminaries_DIP}, makes the Bayesian posterior distribution in \cref{eq:bayes_rule} intractable. %
Instead, in the present work, we are only interested in the uncertainty associated with a specific regularised reconstruction $\hat x$, obtained per \cref{eq:DIP_MAP-obj2}.
To this end, we take a first-order Taylor expansion of the CNN function $x(\theta)$ around its optimised parameters $\hat \theta$ \cite{Mackay1992Thesis,Khan19approximate,Immer2021improving},
\begin{gather}\label{eq:linearised_model}
    h(\params) \coloneqq \netmap + \mJ(\params - \hat \params),
\end{gather}
where $\mJ \! \coloneqq \! \frac{\partial \net}{\partial \params}|_{\params = \hat\params} \! \in \! \BR^{d_{x} \times d_{\theta}}$ is the Jacobian of the CNN function with respect to its parameters evaluated at $\hat \params$.
\emph{We use this tangent linear model to obtain error-bars} for the DIP reconstruction $x(\hat\theta)$, while keeping $ x(\hat\theta)$ as the predictive mean.

When the observation noise is Gaussian, as is the case for the inverse problem formulation in \cref{eq:inverse_problem}, and a Gaussian prior is placed on $\theta$, we recover the conjugate setting; the posterior distribution over the linearised model's reconstructions is Gaussian with predictive covariance $\Sigma_{x|y} = J \Sigma_{\theta|y} J^\top$, where the posterior covariance over the parameters is given by  $\Sigma_{\theta|y} = (\mJ^\top \op^\top \op \mJ + \Sigma_{\theta\theta}^{-1})^{-1}$, for $\Sigma_{\theta\theta} \in \BR^{d_\theta \times d_\theta}$ the covariance of the Gaussian prior.
Combining the Gaussian linear model's error-bars with the DIP reconstruction, our predictive distribution is the Gaussian $\mathcal{N}(\hat x, \Sigma_{x|y})$.
In this conjugate setting, the marginal likelihood of the linearised model can also be computed in closed form and used to tune hyperparameters \cite{Mackay1992Thesis,Immer21Selection,antoran2021fixing,antoran2022Adapting}.

Computing the posterior covariance over linear model parameters $\Sigma_{\theta|y}$ has a cost scaling as $\mcO(d_{\theta}^{3})$.
For large CNNs, such as the U-Nets, this is intractable \cite{daxberger2021bayesian}.
In \cref{sec:probabilistic_model} and \cref{sec:computations}, we derive a dual (observation space) approach with a cost scaling as $\mcO(d_{y}^{3})$ and detail approaches to efficient implementation.
Furthermore, when conjugacy is lost, for instance due to the prior being non-Gaussian, the Laplace approximation is often used to approximate the posterior with a Gaussian surrogate. However, as we will see in \cref{sec:TV_prior}, the TV regulariser does not admit a Laplace approximation. \Cref{sec:TV_prior} and \cref{sec:probabilistic_model} are dedicated to addressing this issue.  

\section{The total variation as a prior}\label{sec:TV_prior}

The regularised reconstruction objective presented in \cref{eq:simple_optimisation_objective} can be interpreted as the negative log of an unormalised posterior distribution over images. 
In this context, the TV regulariser in \cref{eq:TV_equation} corresponds to a prior over candidate reconstructions of the form
\begin{gather}\label{eq:generic_tv_prior}
   p(x) = Z_{\lambda}^{-1} \exp({-\lambda\TV(x)}),
\end{gather}
where $Z_{\lambda} = \int \exp(-\lambda\TV(x))\, {\rm d} x$ is its normalisation constant (the prior is improper, since constant vectors are in the null space of the derivative operator). Working with this prior is computationally intractable because $Z_{\lambda}$ does not admit a closed form.
The Laplace method, which consists of a locally quadratic approximation, does not solve the issue because the second derivative of the TV regulariser is zero everywhere it is defined.

As an alternative to enforce local smoothness in the reconstruction, we construct a Gaussian prior $\mathcal{N}(0, \Sigma_{xx})$ with covariance $\Sigma_{xx}\in \BR^{d_{x} \times d_{x}}$ given by the Matern-$\nicefrac{1}{2}$ kernel
\begin{gather}\label{eq:Gaussian_over_x}
     [\Sigma_{xx}]_{ij,i'j'} = \sigma^{2}\exp\left({\frac{-\dist(i-i', j-j')}{\ell}}\right),
\end{gather}
where $i, j$ index the spatial locations of pixels of $x$, as in \cref{eq:TV_equation}, and $\dist(a, b) = \sqrt{a^{2} + b^{2}}$ denotes the Euclidean vector norm. The hyperparameter $\sigma^{2} \in \BR^{+}$ informs the pixel amplitude while the lengthscale parameter $\ell \in \BR^{+}$ determines the correlation strength between nearby pixels. 

\begin{figure*}[t]
    \centering
    \includegraphics[width=\textwidth]{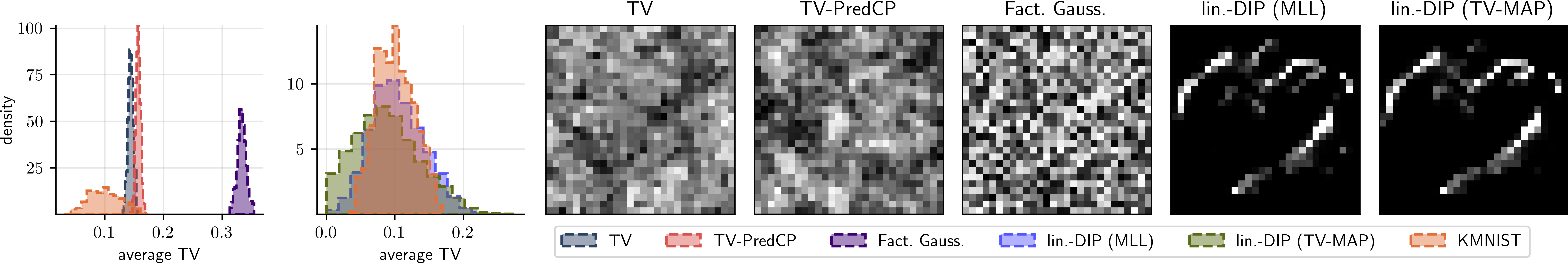}
    \vspace{-.6cm}
    \caption{Samples from priors. From left to right. Plot 1 shows a histogram of the average sample TV reporting an overlap between the TV and TV-PredCP priors. The factorised Gaussian prior results in larger TV values. Plot 2 shows an analogous histogram using samples from the linearised DIP (lin.-DIP) fitted to a KMNIST image, where the hyperparameters ($\ell$, $\sigma^{2}$) have been optimised both with and without the TV-PredCP term. Plots 3-5 show samples from the TV, TV-PredCP, and factorised Gaussian priors proposed in \cref{sec:TV_prior}, drawn using Hamiltonian Monte Carlo (HMC). Qualitatively, the TV prior produces samples with more correlated nearby pixel values than the factorised prior.  The TV-PredCP prior captures this effect and produces even smoother samples, likely due to the presence of longer range correlation in the Matern-$\nicefrac{1}{2}$ covariance. Plots 6-7 show prior samples from the linearised-DIP, which produces samples containing the structure of the KMNIST image used to train the network. The TV-PredCP term in the DIP hyperparameter optimisation to lead to smoother samples with less artefacts.
    }
    \vspace{-0.35cm}
    \label{fig:samples_from_priors}
\end{figure*}

The expected TV associated with our Gaussian prior is
\begin{gather}\label{eq:exact_ETV}
    \kappa := \BE_{x\sim\mathcal{N}(x; 0, \Sigma_{xx})}[\TV(x)] = c  \sigma \sqrt{1 - \exp(-\ell^{-1})},
\end{gather}
with $c$ a constant. See the SM for a proof. Here and below, we may omit the dependence of $\kappa$ on $(\ell, \sigma^{2})$ from our notation for clarity.
For fixed pixel amplitude $\sigma^{2}$, the expected reconstruction TV $\kappa$ is a bijection of the lengthscale $\ell$.  
We leverage this fact within the PredCP framework of \cite{nalisnick2021predicitve} to construct a prior over $\ell$ that will favour reconstructions with low TV
\begin{gather}\label{eq:prior_ell}
     p(\ell) = \text{Exp}(\kappa) \left| \nicefrac{\partial \kappa}{\partial \ell} \right|.
\end{gather}
The hierarchical prior over images, given by
\begin{gather}
x|\ell \sim \mathcal{N}(x; 0, \Sigma_{xx}), \quad \ell \sim \text{Exp}(\kappa) \left| \nicefrac{\partial \kappa}{\partial \ell} \right| ,
\end{gather}
is Gaussian for fixed $\ell$, and thus the prior is conditionally conjugate to Gaussian-linear likelihoods. 

Empirically, \cref{fig:samples_from_priors} shows strong agreement between samples, drawn with Hamiltonian Monte Carlo, from the described TV-PredCP prior and the intractable TV prior, both qualitatively and in terms of distribution over image TV.

\section{The linearised DIP}\label{sec:probabilistic_model}

In this section, we build a probabilistic model that aims to characterise diverse plausible reconstructions around $\hat{\theta}$, a mode of the regularised DIP objective, which we assume to have obtained using \cref{eq:DIP_MAP-obj2}. \Cref{subsec:linearised_prior} describes the construction of a linearised surrogate for the DIP reconstruction. \Cref{subsec:posterior_predictive} describes how to compute the surrogate model's error-bars and use them to augment the DIP reconstruction. \Cref{subseq:tv_for_NN} discusses how we include the effects of TV regularisation into the surrogate model.  Finally, in \cref{subsec:marginal_likelihood}, we describe a strategy to choose the surrogate model's prior hyperparameters using a marginal likelihood~objective.

\subsection{From a prior over parameters to a prior over images}\label{subsec:linearised_prior}

Upon training the DIP network to an optimal TV-regularised setting $\hat{x} = x(\hat{\params})$ using \cref{eq:DIP_MAP-obj2}, we linearise the network around $\hat \theta$ by applying \cref{eq:linearised_model}, and obtain the {affine-in-$\theta$} function $h(\theta)$. The error-bars obtained from Bayesian inference with $h(\theta)$ will act as a surrogate for the uncertainty in $\hat{x}$. To this end, we build the following hierarchical model,
\begin{gather}  
    \data|\params \sim \mathcal{N}(\data; \op h(\params), \sigma_{y}^{2} \mI), \notag\\
    \params|\ell \sim \mathcal{N}(\params; 0, \Sigma_{\params\params}(\ell)), \quad \ell \sim p(\ell), \label{eq:Model_weight_space} 
\end{gather}
where we have placed a Gaussian prior over the parameters $\theta$ that, in turn, depends on the lengthscale $\ell$. The choice of distribution over the lengthscale will allow us to incorporate TV constraints into the computed error-bars, cf. \cref{subseq:tv_for_NN}. We have introduced the noise variance $\sigma_{y}^{2}$ as an additional hyperparameter which we will learn using the marginal likelihood (details in \cref{subsec:marginal_likelihood}).
Importantly, conditional on a value of $\ell$, this is a conjugate Gaussian-linear model and thus the posterior distributions over $\params$ and over reconstructions are Gaussian and have a closed form. 

To provide intuition about the linearised model, we push samples from $\params\sim \mathcal{N}(\theta; {0}, \Sigma_{\params\params})$, through $h$. 
The resulting samples are drawn from a Gaussian prior distribution over reconstructions with covariance $\Sigma_{xx} \in \BR^{d_{x} \times d_{x}}$ given by $\mJ\Sigma_{\params\params}\mJ^{\top}$.
Here, the Jacobian $\mJ$ introduces structure from the NN function around the linearisation point $\hat\theta$. Indeed, \cref{fig:samples_from_priors} shows that samples contain a large amount of the structure of the KMNIST character that the DIP was trained on. 

\subsection{Efficient posterior predictive computation}\label{subsec:posterior_predictive} 
We augment the DIP reconstruction $\hat{x}$ with Gaussian predictive error-bars computed with the linearised model $h$ described in \cref{eq:Model_weight_space}, yielding $\mathcal{N}(\hat x, \Sigma_{x|\data})$.
The posterior covariance $\Sigma_{x|\data}$ is given by the Sherman–Morrison–Woodbury (SMW) formula
\begin{align}\label{eq:posterior_predictive}
    \Sigma_{x|\data} &=  \mJ(\sigma_{y}^{-2} \mJ^\top \op^\top  \op \mJ + \Sigma_{\theta\theta}^{-1})^{-1} \mJ^\top \\ \notag
    &= \Sigma_{xx} - \Sigma_{x\data} \Sigma_{\data \data}^{-1} \Sigma_{x\data}^{\top},
\end{align}
where $\Sigma_{xx} = \mJ\Sigma_{\params\params}\mJ^{\top}$, $\Sigma_{xy} = \Sigma_{xx}A^{\top}$ and $\Sigma_{yy} = A\Sigma_{xx}A^{\top} + \sigma_{y}^2 I$. The constant-in-$\params$ terms in $h$ do not affect the uncertainty estimates, and thus the error-bars match those of the simple linear model $J\theta$.
Importantly, \cref{eq:posterior_predictive} depends on the inverse of the observation space covariance $\Sigma_{\data \data}^{-1}$, as opposed to the covariance over reconstructions, or parameters. 
Thus, \cref{eq:posterior_predictive} scales as $\mathcal{O}(d_{x} d_{y}^{2})$ as opposed to $\mathcal{O}(d_{x}^{3})$ or $\mathcal{O}(d_{\theta}^{3})$ for the more-standard-in-the-literature output (reconstruction) space or parameter space approaches, respectively \cite{Immer2021improving,Daxberger2021redux}. 

\subsection{Incorporating the TV-smoothness} \label{subseq:tv_for_NN}

We aim to impose constraints on $h$'s error-bars, such that our model only considers low TV reconstructions as plausible.
For this, we place a block-diagonal Matern-$\nicefrac{1}{2}$ covariance Gaussian prior on our linearised model's weights, similarly to \cite{fortuin2021bayesian}. 
Specifically, we introduce dependencies between parameters belonging to the same CNN convolutional filter~as
\begin{equation}\label{eq:CovFunc}
     [\Sigma_{\params\params}]_{kij, k'i'j^{'}} = 
            \sigma^{2}\exp\Big({\frac{-\dist(i-i', j-j^{'})}{\ell_{d}}}\Big)\delta_{kk'},
\end{equation} %
where $k$ indexes the convolutional filters in the CNN, $\delta_{kk'}$ denotes Kronecker symbol, and $(i,j)$ index the spatial locations of specific parameters within a filter. The lengthscale $\ell_{d}$ regulates the filter smoothness. 
Intuitively, an image generated from convolutions with smoother filters will present lower TV, and indeed \cref{fig:monotonicity_mini} shows a bijective relationship between this quantity and the filter lengthscale. The hyperparameter $\sigma^{2}_{d}$ determines the marginal prior variance. 

\begin{figure}
    \centering
    \includegraphics[width=0.85\columnwidth, trim={0.25cm 0 0 0},clip]{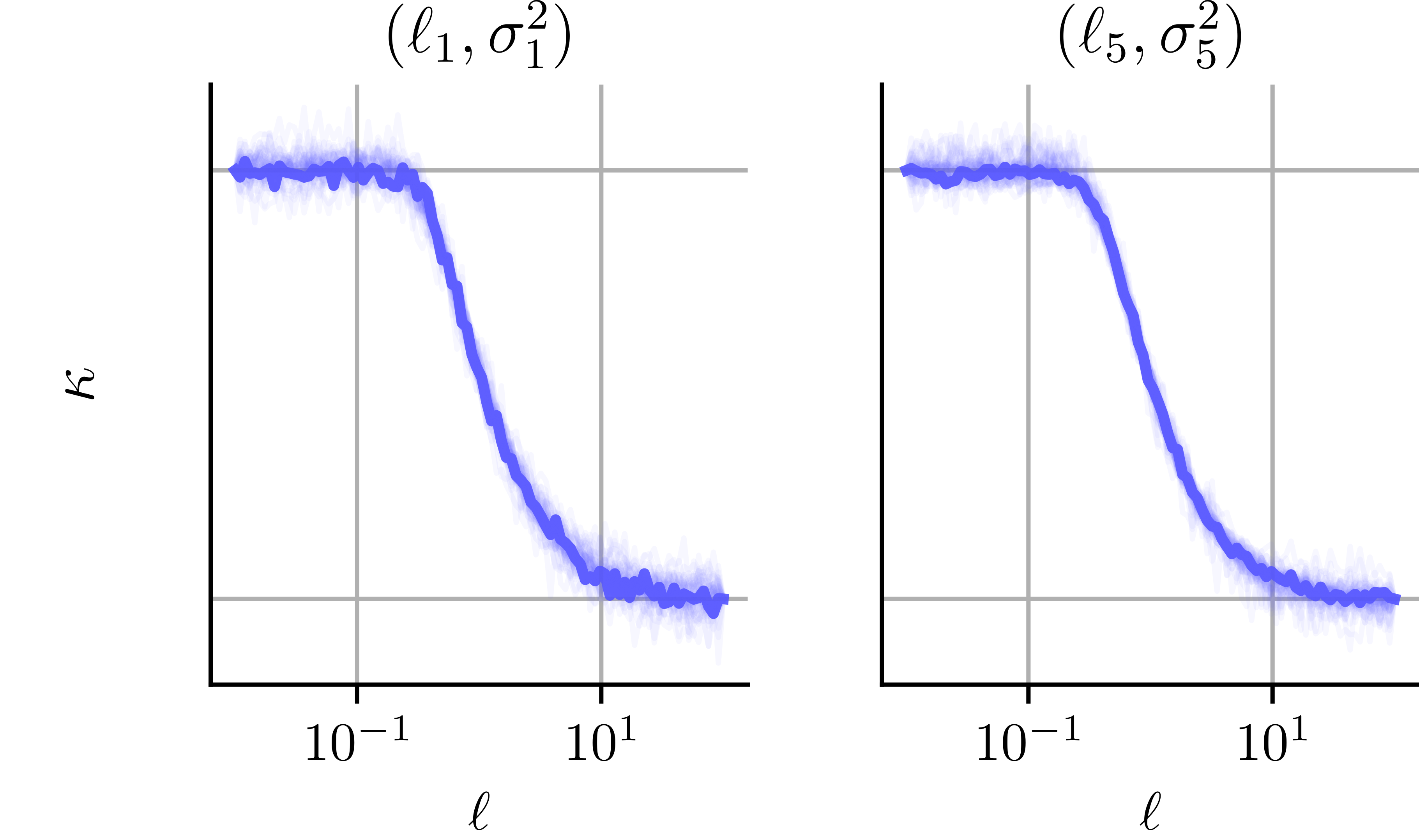}
    \vspace{-1em}
    \caption{Experimental evidence of monotonicity between $\ell$ and $\kappa$, implying the desired invertibility of the mapping, computed over 50 KMNIST test images for two blocks of the linearised 3-scale U-Net: the input block ($\ell_1, \sigma^2_1$) and the outermost $3\times 3$ convolutional block $(\ell_5, \sigma^2_5)$. 
    We draw 500 samples to estimate $\kappa$ for each value of $\ell$.}
    \vspace{-1.5em}
    \label{fig:monotonicity_mini}
\end{figure}

Both parameters are defined per architectural block $d \in \{1,2,  \ldots , D\}$ in the U-Net and we write $\ell = [\ell_{1}, \ell_{2}, \ldots , \ell_{D}]$ and $\sigma^{2} = [\sigma^{2}_{1}, \sigma^{2}_{2}, \ldots , \sigma^{2}_{D}]$. The chosen U-Net architecture is fully convolutional and thus \cref{eq:CovFunc} applies to all parameters, reducing to a diagonal covariance for $1\times 1$ convolutions.
A diagram of the architecture highlighting the described prior blocks $1, \ldots , D$ is in \cref{fig:unet-diagram}. 

To enforce TV-smoothness, we adopt the strategy outlined in \cref{sec:TV_prior}.
Since choosing a large $\ell$ enforces smoothness in the output, a prior placed over the filter lengthscales $\ell$ can act as a surrogate for the $\TV$ prior. To make this connection explicit, we construct a TV-PredCP \cite{nalisnick2021predicitve}
\begin{gather}\label{eq:ell_prior_DIP}
    p(\ell) = \prod_{d=1}^{D} p(\ell_{d}) =\prod_{d=1}^{D}\text{Exp}(\kappa_{d})\left|\frac{\partial \kappa_{d}}{\partial \ell_{d}}\right|,\\
    \text{with }\kappa_{d} := \BE_{
    \mathcal{N}(\hat \theta_d, \Sigma_{\theta_d\theta_d})
    \prod^{D}_{i=1,i\neq d}\delta(\params_{i} - \hat \params_{i})}\left[{\lambda \TV}({h}(\params)) \right],\label{eq:kappa_definition}
\end{gather}
where $\kappa_{d}$ is the expected $\TV$ of the CNN output over the prior uncertainty in the parameters of block $d$ when all other entries of $\params$ are fixed to $\hat \theta$. This latter choice ensures the mean reconstruction matches $\hat x$.
We relate the expected $\TV$ to the filter lengthscale $\ell_{d}$ by means of the change of variables formula. The separable-across-blocks form of $p(\ell)$ and consequent block-wise definition of $\kappa_{d}$ ensures dimensionality preservation, formally needed in the change of variables. By the triangle inequality, it can be verified that $\sum_{d} \kappa_{d}$ is an upper bound on the expectation under the joint distribution $\BE_{\mathcal{N}(\theta; \hat \theta, \Sigma_{\theta\theta})}[\TV({h}(\params))]$; see the SM for a proof.
Note that \cref{eq:ell_prior_DIP} can be computed analytically. However, its direct computation is costly and we instead rely on numerical methods described in \cref{sec:computations}. In \cref{fig:samples_from_priors} (cf. plot 2 and plots 6-7), we show samples from $\mathcal{N}(x; 0, \Sigma_{xx})$ where $\ell$ is chosen using the marginal likelihood with TV-PredCP constraints (discussed subsequently in \cref{subsec:marginal_likelihood}).
Incorporating the TV-PredCP leads to smoother samples with less discontinuities.

\begin{figure}[h]
    \centering
    \includegraphics[width=\columnwidth]{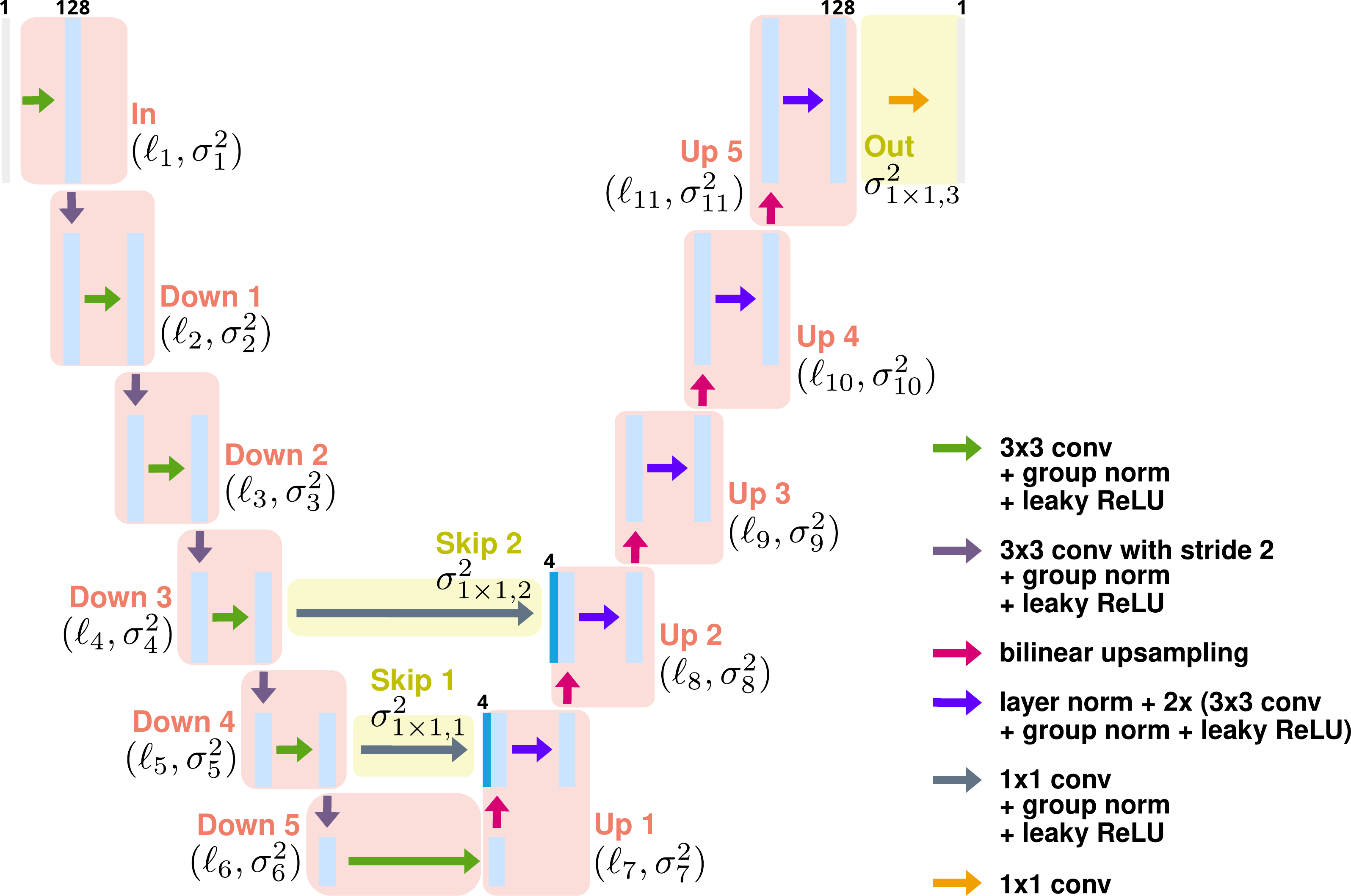}
    \vspace{-1.5em}
    \caption{A schematic illustration of the U-Net architecture used in the numerical experiments on Walnut data (see \cref{subsec:walnut-exps}). For KMNIST, we use a reduced, 3-scale U-Net without group norm layers (see \cref{fig:unet-diagram-kmnist}). Each light-blue rectangle corresponds to a multi-channel feature map. We highlight the architectural components corresponding to each block ${1,\dotsc,D}$ for which a separate prior is defined with red and yellow boxes.}
    \vspace{-0.75em}
    \label{fig:unet-diagram}
\end{figure}

\subsection{Type-II MAP learning of hyperparameters}\label{subsec:marginal_likelihood}

The calibration of our predictive Gaussian error-bars, as described above, depends crucially on the choice of the hyperparameters of the linearised hierarchical model given in \cref{eq:Model_weight_space} \cite{antoran2022Adapting}, that is, on  the values of $(\sigma^{2}_{y}, \sigma^{2}, \ell)$. For a given lengthscale $\ell$, Gaussian-linear conjugacy leads to a closed form marginal likelihood objective to learn these hyparameters. In turn, to learn $\ell$, we combine the aforementioned objective with the TV-PredCP's log-density, which acts as a regulariser. The resulting expression resembles a Type-II \emph{maximum a posteriori} (MAP) \cite{gpbook} objective
{
\vspace{-.1cm}
\begin{align}
    &\log  \,p(\data |\ell; \sigma^{2}_{y}, \sigma^{2} ) + 
    \log p(\ell; \sigma^{2}) \approx \notag
    \\
    &- \frac{1}{2} \sigma^{-2}_{y}||\data - \op\netmap||_{2}^{2} - \frac{1}{2} \hat \params_{h}^{\top} \Sigma^{-1}_{\params\params}(\ell, \sigma^{2}) \hat \params_{h} \label{eq:type2MAP}\\
    &- \frac{1}{2} \log |\Sigma_{\data \data}| - \sum_{d=1}^{D} \kappa_{d}(\ell, \sigma^2) + \log \left|\frac{\partial \kappa_{d}(\ell, \sigma^2)}{\partial \ell_{d}}\right| +  B, \notag
\end{align}}where $B$ is a constant independent of the hyperparameters and the vector $\hat \params_{h} \in \BR^{d_{\params}}$ is the posterior mean of the linear model's parameters. See the SM for the detailed derivation of the expression. Following \cite{antoran2021fixing}, we compute this vector by minimising $\sigma_{y}^{-2}\|\op h(\theta_{h}) - y\|_2^2 + \lambda \TV({h}(\params_h))$. It is worth noting that the vector $\hat\theta_h$ did not appear in our predictive distribution over reconstructions \cref{eq:posterior_predictive}, since we use the DIP reconstruction as the predictive mean. 
The bottleneck in evaluating \cref{eq:type2MAP} is the log-determinant of the observation covariance $\Sigma_{yy}$, which has a cost $\mathcal{O}(d_{y}^{3})$.
In the following section we describe scalable ways to approximate the log-determinant together with other expensive to compute quantities required for prediction.

\section{Towards scalable computation}\label{sec:computations}

In a typical tomography setting, the dimensionality $d_x$ of the reconstructed image $\hat x$ and $d_y$ of the observation $\data$ can be large, e.g. $d_{x} > {\rm1e5}$ and $d_{y} >{\rm 1e3}$. Due to the former, holding in memory the input space covariance matrices (e.g.\ $\Sigma_{xx}$ and $\Sigma_{x|\data}$) is infeasible.   The latter greatly complicates the computation of the log-determinant of the observation space covariance $\Sigma_{\data \data}$ in the marginal likelihood \cref{eq:type2MAP} (or its gradients), and its inverse in the posterior predictive distribution \cref{eq:posterior_predictive}, which scale as $\mathcal{O}(d_{y}^{3})$ and $\mathcal{O}(d_{x} d_{y}^{2})$, respectively. To scale our approach to tomographic problems, throughout this paper, we only access Jacobian and covariance matrices through matrix–vector products, commonly known as \textit{matvecs}. Specifically, our workhorses are products resembling ${v}_{x}^{\top} \Sigma_{xx}$ and ${v}_{y}^{\top} \Sigma_{\data \data}$ for ${v}_{x} \in \BR^{d_{x}}$ and ${v}_{y} \in \BR^{d_{y}}$. We compute ${v}_{y}^{\top} \Sigma_{\data \data}$ through successive matvecs with the components of $\Sigma_{\data \data}$
\begin{gather}\label{eq:closure}
    {v}_{y}^{\top} \Sigma_{\data \data} = {v}_{y}^{\top} \left(\op \mJ \Sigma_{\params\params} \mJ^{\top} \op^{\top} + \sigma_{y}^{2} \mI_{d_{y}}\right),
\end{gather}
and we compute ${v}_{x}^{\top} \Sigma_{xx}$ similarly. 
We compute Jacobian vector products ${v}_{\theta}^{\top} \mJ^{\top}$ for $v_\theta\in \mathbb{R}^{d_\theta}$ using forward mode automatic differentiation (AD) and ${v}_{x}^{\top} \mJ$ using backward mode AD, using the \texttt{functorch} library \cite{functorch2021}.
We compute products with $\Sigma_{\params\params}$ by exploiting its block diagonal structure.
All these operations can be batched using modern numerical libraries (i.e. \texttt{functorch}, \texttt{GpyTorch} \cite{Gardner2018gpytorch}, \texttt{PyTorch})~and~GPUs.

\subsection{Conjugate gradient log-determinant gradients}
\label{subsec:cg_logdet_grads}

For the TypeII-MAP optimisation in \cref{eq:type2MAP}, we estimate the gradients of $\log|\Sigma_{\data \data}|$ with respect to the parameters of interest $\phi$ using the stochastic trace estimator \cite{Hutchinson90trace}
\begin{align}\label{eq:cg_logdet_grad}
\frac{\partial \log|\Sigma_{\data \data}|}{\partial \phi} &= \text{Tr}\left( \Sigma_{\data \data}^{-1} \frac{\partial \Sigma_{\data \data}}{\partial \phi} \right) \\
&= \BE_{{v}\sim \mathcal{N}(v; 0, \mP)}\left[{v}^{\top} \Sigma_{\data \data}^{-1} \frac{\partial \Sigma_{\data \data}}{\partial \phi} \mP^{-1}{v}\right] \notag,
\end{align}
where $P$ is a preconditioner matrix.
We approximately solve the linear system ${v}^{\top} \Sigma_{\data \data}^{-1}$ for batches of probe vectors ${v}$ using the \texttt{GPyTorch} preconditioned conjugate gradient (PCG) implementation \cite{Dong2017determinant,Gardner2018gpytorch}.

Our preconditioner $\mP$ is constructed
using randomised SVD. 
Specifically, we approximate $\op \mJ\Sigma_{\params\params}\mJ^{\top}\op^{\top}$---for simplicity denoted as $H\in\mathbb{R}^{d_{y}\times d_{y}}$---as $ \tilde U \tilde \Lambda \tilde U^{\top}$, using a randomised eigendecomposition algorithm  \cite{Halko2011structure}, \cite{martinsson2020randomized}.
The approach first computes an orthonormal basis capturing the space spanned by $H$'s columns.
The idea is to obtain a matrix Q with $r$ orthonormal columns, that approximates the range of $H$.
This is done by constructing a standard normal test matrix $\Omega \in \mathbb{R}^{d_{y} \times r}$, and computing the (thin) QR decomposition of $H\Omega$.
Once $Q$ is computed, we solve for a symmetric matrix $B\in\mathbb{R}^{r\times r}$ (much smaller than $H$) such that $B$ approximately satisfies $B(Q^\top\Omega) \approx Q^\top H\Omega$.
We then compute the eigendecomposition of $B$, $V\Lambda V^\top$, and recover $\tilde U = QV$.
This method requires $\mathcal{O}(r)$ matvecs resembling $Hv$ to construct not only an approximate basis but also its complete factorisation. 
Finally, the preconditioner $\mP$ is defined as $\tilde U \tilde \Lambda \tilde U^{\top} + \sigma^{2}_{y}\mI$.
To compute $\mP^{-1}v$ efficiently, we make use of the SMW formula. 
Since $\mP$ depends on $\phi$, we interleave the updates of $\mP$ with the optimisation of \cref{eq:type2MAP}.

\subsection{Ancestral sampling for TV-PredCP optimisation}\label{sec:predcptv_computation}

For large images, exact evaluation of \cref{eq:kappa_definition} for the expected TV is computationally intractable. Instead, we estimate the gradient of $\kappa_d$ with respect to the parameters of interest $\phi = (\sigma^2, \ell)$ using a Monte-Carlo approximation to 
\begin{gather}
    \frac{\partial \kappa_{d}}{\partial\phi} = \BE_{\params_{d}\sim \mathcal{N}(\theta_d; \hat \theta_{d}, \Sigma_{\theta_d \theta_d})}\left[ \frac{\partial \TV(x)}{\partial x} \mJ_{d}\frac{\partial \params_{d} }{\partial \phi}\right],
    \label{eq:tv_grad_est}
\end{gather}
where $\frac{\partial \TV(x)}{\partial x}$ is evaluated at the sample $x{=}\mJ_{d}\params_{d}$ and $\frac{\partial \params_{d}}{\partial \phi}$ is the reparametrisation gradient for $\params_{d}$, a prior sample of the weights of CNN block $d$. Since the second derivative of the $\TV$ semi-norm is almost everywhere zero, the gradient for the change of variables volume ratio is
\begin{gather}\label{eq:tv_grad_est_2}
    \frac{\partial^2 \kappa_{d}}{\partial\phi^2} = \BE_{\params_{d}\sim \mathcal{N}(\theta_d; \hat \theta_{d}, \Sigma_{\theta_d \theta_d})}\left[ \frac{\partial \TV(x)}{\partial \phi} \mJ_{d}
    \frac{\partial^2 \params_{d}}{\partial \phi^2}
    \right].
\end{gather}

\subsection{Posterior covariance matrix estimation by sampling}\label{sec:post_sampling}

The covariance matrix $\Sigma_{x | \data}$ is too large to fit into memory for most tomographic reconstructions. Instead, we follow \cite{Wilson21pathwise} in drawing samples from $\mathcal{N}(x; 0, \Sigma_{x|\data})$ via Matheron's rule
\begin{gather}
    x_{x|\data} = x + \Sigma_{xy}\Sigma_{\data \data}^{-1}(\epsilon - \op  x_0); \notag \\
    x_0 = \mJ \params_0; \quad \params_0 \sim \mathcal{N}(\theta; 0, \Sigma_{\params\params}); \quad \epsilon \sim \mathcal{N}(\epsilon; 0, \sigma_{y}^{2} \mI).\label{eq:matheron}
\end{gather}
The biggest cost lies in constructing $\Sigma_{\data \data}$, which is achieved by applying \cref{eq:closure} to the standard basis vectors $\Sigma_{\data \data} = [e_{1}, e_{2},\,...\, e_{d_{y}}]^{\top} \Sigma_{\data \data}$. We then perform its Cholesky factorisation as an intermediate step towards matrix inversion, both relatively costly operations.
Fortunately, we only have to repeat these once, after which the sampling step in \cref{eq:matheron} can be evaluated cheaply.
Alternatively, as in \cref{eq:cg_logdet_grad}, we can compute the solution of the linear system, $\Sigma_{\data\data}^{-1} v_{\data}$ for any $v_{\data}$ via PCG, without explicitly assembling (thus storing in memory) the measurement covariance matrix, or computing its Cholesky factorisation.
This approach allows us to scale the sampling operation to large measurement spaces, where the matrix $\Sigma_{\data\data}$ may not fit in memory.

The samples drawn are zero mean, as the quantities we are interested in do not depend on the linear model's mean.
The full predictive covariance matrix $\Sigma_{x|y}$ does not fit in memory. However, we only expect our predictions to be correlated for nearby pixels. Thus, we estimate cross covariances for patches of only up to $10\times 10$ adjacent pixels using the stabilised formulation of \cite{Maddox2019Simple}: $\hat \Sigma_{x|y} = \frac{1}{2k} \left[\sum_{j=1}^k {x_{j}}^{2} + x_{j} {x_{j}}^{\top}\right]$ for $({x}_{j})_{j=1}^k$ samples from the posterior predictive over a patch. Using larger patches yields little to no improvements.

\begin{figure*}[thb]
    \centering
    \includegraphics[width=\textwidth]{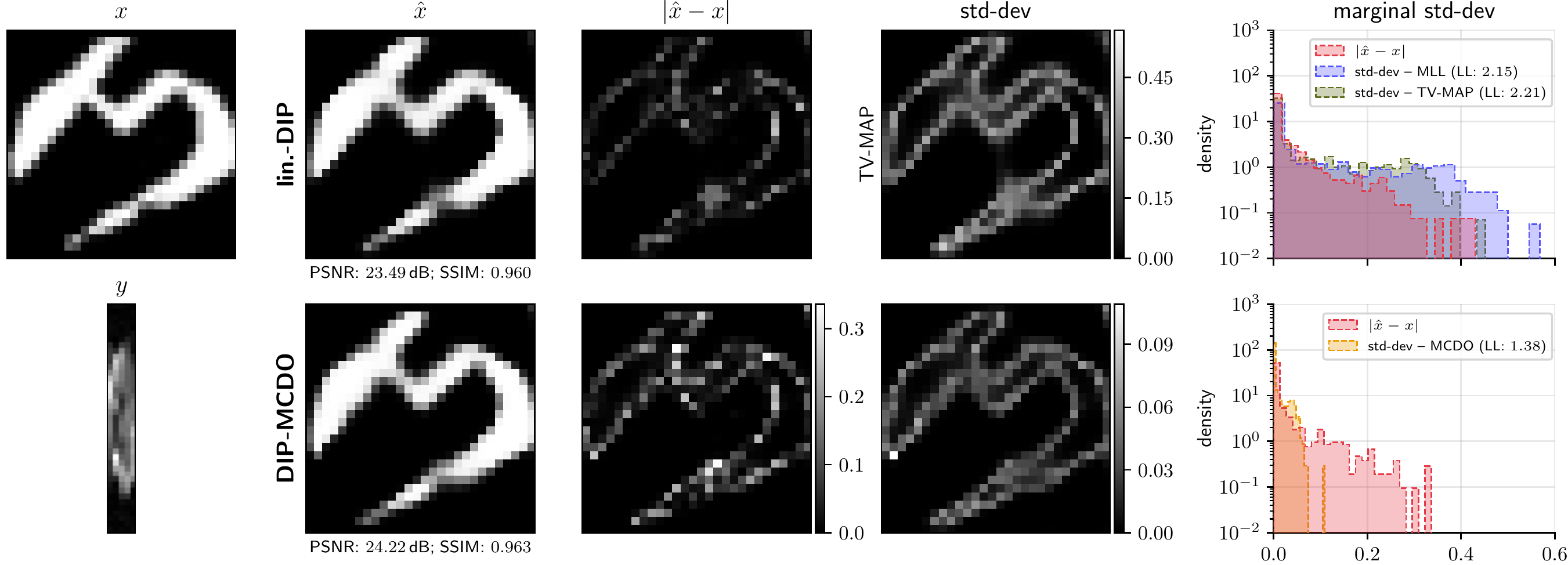}
    \vspace{-.5cm}
    \caption{Exemplary character recovered from $y$ (using $5$ angles and $\eta(5\%)$) with lin.-DIP and DIP-MCDO along with respective uncertainty estimates.}
    \vspace{-1.25em}
    \label{fig:main_kmnist}
\end{figure*}

We now turn to accelerating Jacobian matvecs through approximate computations.
\Cref{tab:wall-clock} shows that the Jacobian matvecs involved in sampling from the posterior predictive (2$\times$ $v_{\theta}^{\top}\mJ^{\top}$ + 1$\times$ $v_{x}^{\top}\mJ$) take $\approx100\,\%$ of this step's computation time (2.4\,h).
To address this inefficiency, we construct a low-rank approximation $\tilde \mJ$ %
of the Jacobian matrix $\mJ$, which we store in memory. $v_{\theta}^{\top}\tilde\mJ^{\top}$ and $v_{x}^{\top}\tilde\mJ$ can then be computed via matrix multiplication (as opposed to automatic differentiation), which is a highly optimised primitive. This allows for a fast yet approximate computation of %
matvecs with $\Sigma_{\data\data}$ by substituting $\tilde J$ into \cref{eq:closure}. In turn, this results in faster sampling from the posterior predictive, bring it from 2.4 hours down to less than a minute.
We construct $\tilde J$ similarly to $\mP$ in \cref{subsec:cg_logdet_grads}. That is, following  \cite{Halko2011structure}, we build a structured $r$-rank approximation to $\mJ$, by having access only to matvecs~with~$\mJ$~and~$\mJ^{\top}$.

\begin{table}[h]
    \centering
    \caption{Wall-clock time for the different steps of our algorith when applied to the Walnut data (see \cref{subsec:walnut-exps}) using A100 GPU. $d_{y}, d_{x}, d_{\theta}$ are n. observed pixels, reconstructed pixels, NN~params respectively. Computations reported below the dotted line are in double precision.}
    \resizebox{\linewidth}{!}{
    \begin{tabular}{lr@{\hskip 0.9in}}
        & wall-clock time\hspace{-0.4in} \\
    \hline
    DIP optim. (after pretraining \cite{barbano2021deep})& \hphantom{0}$\mathllap{<}$0.1\,h\\
    Hyperparam. optim. (MLL) & 26.2\,h\\
    Hyperparam. optim. (TV-MAP) & 35.4\,h\\[0.1cm]
    \hdashline%
    \addlinespace[0.1cm]%
    Assemble $\Sigma_{\data\data}$ & \hphantom{0}2.7\,h\\
    Draw 4096 posterior samples  & \hphantom{0}2.4\,h\\
    (Evaluate 4096 times 2$\times$ $v_{\theta}^{\top}\mJ^{\top}$ + 1$\times$ $v_{x}^{\top}\mJ$) & \hphantom{0}2.4\,h\\
    Draw 4096 posterior samples ($\tilde \mJ$ $\&$ PCG) & \hphantom{0}0.1\,h$\mathrlap{\text{ (+0.2\,h for $\tilde \mJ$)}}$\\
    (Evaluate 4096 times 2$\times$ $v_{\theta}^{\top}\tilde \mJ^{\top}$ + 1$\times$ $v_{x}^{\top}\tilde \mJ$) & \hphantom{0}$\mathllap{<}$0.1\,$\mathrlap{\text{min (+0.2\,h for $\tilde \mJ$)}}$\hphantom{h}\\
    \hline
    \end{tabular}
    }
    \vspace{-1em}
    \label{tab:wall-clock}
\end{table}

\section{Experimental evaluation} \label{sec:experiments}

In this section, we experimentally evaluate: i) the properties of the models and priors discussed in \cref{sec:TV_prior,sec:probabilistic_model}, and whether they lead to accurate reconstructions and calibrated uncertainty; ii) the fidelity of the approximations described in \cref{sec:computations}; and iii) the performance of the proposed method ``linearised-DIP'' (lin.-DIP) relative to the previous MC dropout (MCDO) based probabilistic formulation of DIP \cite{laves2020uncertainty}.
We attempted to include DIP-SGLD \cite{Cheng2019Bayesian} in our analysis, but were unable to get the method to produce competitive results on tomographic reconstruction problems. For each individual image to be reconstructed, we employ the following linearised DIP inference procedure: i) optimise the DIP weights via \cref{eq:DIP_MAP-obj2}, obtaining $\hat x = x(\hat\params)$; ii) optimise prior hyperparameters ($\sigma_{y}^{2}$, $\ell$, $\sigma^{2}$) via \cref{eq:type2MAP}; iii) assemble and Cholesky decompose $\Sigma_{\data \data}$ with \cref{eq:closure} (this step can be accelerated using approximate methods \cref{sec:post_sampling}); iv) compute posterior covariance matrices either via \cref{eq:posterior_predictive}, or estimate them via \cref{eq:matheron}. 

\subsection{Reconstruction of KMNIST digits}

Our initial analysis uses simulated CT data obtained by applying \cref{eq:inverse_problem} to 50 images from the test set of the Kuzushiji-MNIST (KMNIST) dataset: $28\times 28$ ($d_{x}=784$) grayscale images of Hiragana characters \cite{clanuwat2018deep}. For each image, we choose the noise standard deviation to be either 5\% or 10\% of the mean of $\op x$, denoted as $\eta(5 \%)$ or $\eta(10\%)$.
The forward operator $\op$ is taken to be the discrete Radon transform, assembled via \texttt{ODL} \cite{adler2017operator}, a commonly employed software package in CT reconstruction. For KMNIST we use a U-Net with depth of 3 and $76905$ parameters (a down-sized net compared to the one in \cref{fig:unet-diagram}).

\subsubsection{Comparing linearised DIP with network-free priors} \label{subsec:HMC-exps}

We first evaluate the priors described in \cref{sec:TV_prior}, that is, the intractable TV prior, the proposed TV-PredCP with a Matern-$\nicefrac{1}{2}$ kernel and a factorised Gaussian prior, by performing inference in the setting where the operator $\op$ collects $5$ angles ($d_{y}=205$) sampled uniformly from $0^\circ$ to $180^\circ$ and is applied to $50$ KMNIST test set images. Here, $10\%$ noise is added.
This results in a very ill-posed reconstruction problem, maximising the relevance of the prior.
We select the $\sigma_y^{2}$ and $\lambda$ hyperparameters for the factorised Gaussian prior and the intractable TV prior respectively such that the posterior mean's PSNR is maximised across a validation set of 10 images from the KMNIST training set. 
We keep the choice of $\sigma_y^{2}$ and $\lambda$ hyperparameters from the first two models for our experiments with the third model (Matern-$\nicefrac{1}{2}$ with TV-PredCP prior over $\ell$).
For all priors, we perform inference with the NUTS HMC sampler. We run 5 independent chains for each image. We burn these in for $3\times 10^3$ steps each and then proceed to draw $10^{4}$ samples with a thinning factor of 2.
We evaluate test log-likelihood using Gaussian Kernel Density Estimation (KDE) \cite{Silverman:1986}. The kernel bandwidth is chosen using cross-validation on 10 images from the training set.

The results in \cref{tab:HMC} show that the TV-PredCP performs best in terms of the test log-likelihood and both posterior mean and posterior mode PSNR, followed by the TV and then the factorised Gaussian. This is somewhat surprising considering that this prior was designed as an approximation to the intractable TV prior.
We hypothesise that this may be due to the Matern model allowing for faster transitions in the image than the TV prior, while still capturing local correlations, as shown qualitatively in \cref{fig:samples_from_priors}. This property may be well-suited to the KMNIST datasets, where most pixels either present large amplitudes or are close to 0. For comparison, we include results for DIP-based predictions, which handily outperform all non-NN-based methods.

\begin{table}[h]
\centering
\caption{Quantitative results for inference with the different priors introduced in \cref{sec:TV_prior}. Results for lin.-DIP prior are also provided to facilitate their comparison. We report both the PSNR of $\BE[x | \data]$, which denotes the posterior mean reconstruction, and the PSNR of $\hat x$, which denotes the posterior mode found through optimisation.}
\label{tab:HMC}
\resizebox{\columnwidth}{!}{%
\begin{tabular}{lccc}
 & log-likelihood & $\BE[x | \data]$ &  $\hat x$\\ \hline
Fact. Gauss. & $0.30 \pm 0.17$ & $16.15 \pm 0.38$ & $14.89 \pm 0.38$ \\
TV & $0.49 \pm 0.14$ & $16.32 \pm 0.38$ & $16.29 \pm 0.41$ \\
TV-PredCP & \textbf{0.65} $\pm$ \textbf{0.12} & \textbf{16.55} $\pm$ \textbf{0.39} & \textbf{17.48} $\pm$ \textbf{0.39} \\ \hline 
lin.-DIP (MLL) & $1.63 \pm 0.08$ & $-$ & 19.46 $\pm$ 0.52\\
lin.-DIP (TV-MAP) & $1.63 \pm 0.09$ & $-$ & 19.46 $\pm$ 0.52\\ 
\hline
\end{tabular}%
}
\vspace{-1.25em}
\end{table}

\subsubsection{Comparing calibration with DIP uncertainty quantification baselines}\label{sec:calibration_comparison}

Using KMNIST, we construct test cases of different ill-posedness by simulating the observation $\data$ with four different angle sub-sampling settings for the linear operator $\op$: $30$ ($d_{y}{=}1230$), $20$ ($d_{y}{=}820$), $10$ ($d_{y}{=}410$) and $5$ ($d_{y}{=}205$) angles are taken uniformly from the range $0^\circ$ to $180^\circ$.
We consider two noise configurations by adding either $5\%$ or $10\%$ noise to the exact data $\op x$.
We evaluate all DIP-based methods using the same $50$ randomly chosen KMNIST test set images.
To ensure a best-case showing of the methods, we choose appropriate hyperparameters for each number of angles and white noise percentage setting by applying grid-search cross-validation, using 50 images from the KMNIST training dataset.
Specifically, we tune the TV strength $\lambda$ and the number of iterations for linearised DIP. Due to the reduced image size, we apply linearised DIP as in \cref{sec:probabilistic_model}, without approximate computations. 
As an ablation study, we include additional baselines: linearised DIP without the TV-PredCP prior over hyperparameters (labelled MLL), and DIP reconstruction with a simple Gaussian noise model consisting of the back-projected observation noise $\mathcal{N}(x; \hat x, \sigma^{2}_{\op} \mI)$, with $\sigma^{2}_{\op} = \sigma^{2}_{y}\text{Tr}((\op^{\top}\op)^{\dagger}) d^{-1}_{x}$ where $\sigma_{y}^{2}{=}1$ (labelled $\sigma_{y}^{2}{=}1$).
Note that non-dropout methods share the same DIP parameters $\hat\params$, and thus the same mean reconstruction.
Hence, higher values in log-density indicate better uncertainty calibration, i.e., the predictive standard deviation better matches the empirical reconstruction error.
DIP-MCDO does not provide an explicit likelihood function over the reconstructed image. We model its uncertainty with a Gaussian predictive distribution with covariance estimated from $2^{14}$ samples. MNIST images are quantised to 256 bins, but our models make predictions over continuous pixel values. Thus, we simulate a de-quantisation of KMNIST images by adding a noise jitter term of variance approximately matching that of a uniform distribution over the quantisation step \cite{Hoogeboom2020dequantisation}.

Table \ref{tab:kmnist_test_log_lik} shows the test log-likelihood for all the methods and experimental settings under consideration. The peak signal-to-noise ratio (PSNR) and Structural Similarity Index (SSIM) of posterior mean reconstructions are given in \cref{tab:kmnist_image_PSNR}.
All methods show similar PSNR with the standard DIP (with TV regularisation) obtaining better PSNR in the very ill-posed setting ($5$ angles) and MCDO obtaining marginally better reconstruction in all others. Despite this, the linearised DIP outperforms all baselines in terms of test log-likelihood in all settings.
Furthermore, since the DIP provides 4dB higher PSNR reconstructions than the non-DIP based priors (cf. \cref{tab:HMC}), linearised DIP handily obtains a better test log-likelihood than these more-traditional methods.

\begin{table}[h!]
\caption{Mean and std-err of test log-likelihood computed over 50 KMNIST test images for all methods under consideration.}
\resizebox{\columnwidth}{!}{
\begin{tabular}{lcccc}
$\eta$ (5\%)&\hspace{-1cm}$\#$angles:\quad $5$ & $10$ & $20$ & $30$\\
\hline
DIP ($\sigma^2_y$ = 1) & 0.68 $\pm$ 0.14 & 1.57 $\pm$ 0.02 & 1.85 $\pm$ 0.02 & 2.02 $\pm$ 0.02\\
DIP-MCDO & 0.74 $\pm$ 0.13 & 1.60 $\pm$ 0.02 & 1.87 $\pm$ 0.02 & 2.05 $\pm$ 0.02\\
lin.-DIP (MLL) & \textbf{1.90} $\pm$ \textbf{0.14} & \textbf{2.57} $\pm$ \textbf{0.09} & \textbf{2.94} $\pm$ \textbf{0.10} & \textbf{3.09} $\pm$ \textbf{0.12}\\
lin.-DIP (TV-MAP) & \textbf{1.88} $\pm$ \textbf{0.15} & \textbf{2.59} $\pm$ \textbf{0.10} & \textbf{2.96} $\pm$ \textbf{0.10} & \textbf{3.11} $\pm$ \textbf{0.12}\\
\hline
\end{tabular}
}
\\[0.75em]
\resizebox{\columnwidth}{!}{
\begin{tabular}{lcccc}
$\eta$ (10\%)&\hspace{-1cm}$\#$angles:\quad $5$ & $10$ & $20$ & $30$\\
\hline
DIP ($\sigma^2_y$ = 1) & $0.27 \pm 0.17$ & $1.31 \pm 0.04$ & $1.62 \pm 0.03$ & $1.76 \pm 0.04$\\
DIP-MCDO & 0.42 $\pm$ 0.14 & 1.39 $\pm$ 0.04 & 1.70 $\pm$ 0.03 & 1.85 $\pm$ 0.04\\
lin.-DIP (MLL) & \textbf{1.63} $\pm$ \textbf{0.08} & \textbf{2.11} $\pm$ \textbf{0.07} & \textbf{2.43}  $\pm$ \textbf{0.07} & \textbf{2.59} $\pm$ \textbf{0.08}\\
lin.-DIP (TV-MAP) & \textbf{1.63} $\pm$ \textbf{0.09} & \textbf{2.13} $\pm$ \textbf{0.07} & \textbf{2.45} $\pm$ \textbf{0.08} & \textbf{2.61} $\pm$ \textbf{0.08}\\
\hline
\end{tabular}
}
\label{tab:kmnist_test_log_lik}
\end{table}

\begin{table}[h!]
\caption{PSNR [dB] / SSIM of the reconstruction posterior mean, averaged over 50 KMNIST test images for all inference methods.}
\resizebox{\columnwidth}{!}{%
\begin{tabular}{lcccc}
$\eta$ (5\%)&\hspace{-.25cm}$\#$angles:\quad $5$ & $10$ & $20$ & $30$\\
\hline
DIP &  \textbf{21.42}/ \textbf{0.890} & 27.92/ \textbf{0.977} & 31.21/ \textbf{0.988} & 32.93/ \textbf{0.991}\\
DIP-MCDO & 20.95/0.882 &  \textbf{28.26}/ \textbf{0.977} &  \textbf{31.65}/0.986 &  \textbf{33.45}/0.990\\
\hline
\end{tabular}%
}
\\[0.75em]
\resizebox{\columnwidth}{!}{%
\begin{tabular}{lcccc}
$\eta$ (10\%)&\hspace{-.25cm}$\#$angles:\quad $5$ & $10$ & $20$ & $30$\\
\hline
DIP &  \textbf{19.46}/ \textbf{0.846} & 24.56/ \textbf{0.956} & 27.27/ \textbf{0.974} & 28.57/ \textbf{0.980}\\
DIP-MCDO & 18.91/0.830 & \textbf{24.76}/0.953 & \textbf{27.72}/0.972 &  \textbf{29.09}/0.978\\
\hline
\end{tabular}%
}%
\label{tab:kmnist_image_PSNR}
\end{table}

\begin{table}[h!]
\caption{Evaluation of our approximate covariance estimation methods in terms of test log-likelihood over 10 KMNIST test images considering the 20 angle ($d_{y}=820$) setting and using lin.-DIP (MLL).}
\resizebox{\columnwidth}{!}{%
\begin{tabular}{lcccc}
$\eta$ (\%)& \shortstack{exact\\cov. \cref{eq:posterior_predictive}} & \shortstack{sampled\\cov. \cref{eq:matheron}} & \shortstack{sampled\\cov. ($\tilde \mJ$) \cref{eq:matheron}} & \shortstack{sampled cov.\\($\tilde \mJ$ $\&$ PCG) \cref{eq:matheron}}\\
\hline
$5$ & $2.80 \pm 0.06$ & $2.80 \pm 0.06$ & $2.68 \pm 0.09$ & $2.62 \pm 0.09$ \\
$10$ & $2.26 \pm 0.06$ & $2.26 \pm 0.06$ & $2.21 \pm 0.06$ & $2.22 \pm 0.06$ \\
\hline
\end{tabular}%
}
\label{tab:valid_sampl_approx}
\vspace{-1.25em}
\end{table}

\begin{figure*}[t!]
    \centering
    \includegraphics[width=\textwidth]{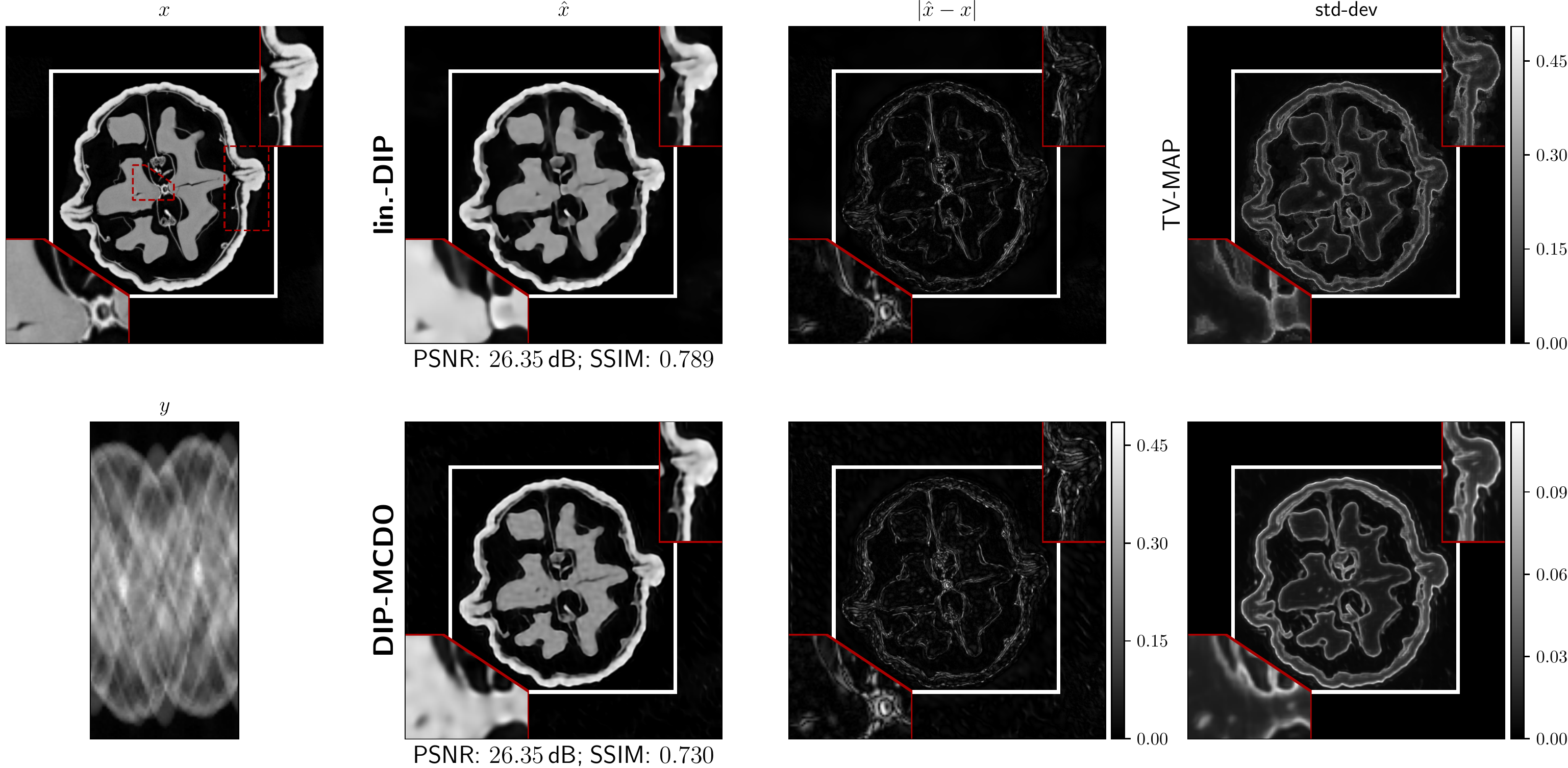}
    \vspace{-1.25em}
    \caption{
    The reconstruction of a Walnut using lin.-DIP and DIP-MCDO along with their respective uncertainty estimates.
    }
    \label{fig:main_walnut}
\end{figure*}

\Cref{fig:main_kmnist} shows an exemplary character recovered from a simulated observation $\data$ (using 20 angles and $5\%$ noise) with both linearised DIP and DIP-MCDO along with their associated uncertainty maps and calibration plots. DIP-MDCO systematically underestimates uncertainty for pixels on which the error is large, explaining its poor test log-likelihood.
The pixel-wise standard deviation provided by linearised DIP (TV-MAP) better correlates with the reconstruction error.

\subsubsection{Evaluating the fidelity of sample-based predictive covariance matrix estimation }

We evaluate the accuracy of the sampling, conjugate gradient and low rank based approximations to constructing the predictive covariance $\Sigma_{x|y}$ discussed in \cref{sec:computations}. As a reference, we compute the exact predictive covariance as in \cref{eq:posterior_predictive}, which is tractable for KMNIST. Our approximate methods use \cref{eq:matheron} to draw zero mean samples and use these to estimate $\hat \Sigma_{x|y} = 1/k \sum^{k}_{j=1} x_k x_{k}^\top$.
We construct the exact covariance matrix, forgoing patch-based approximations and stabilised estimators as to isolate the effect of using different approaches to sampling.
\Cref{tab:valid_sampl_approx} shows that estimating the covariance matrix using exact samples provides no decrease in performance relative to using the exact matrix. Using a low-rank approximation to the Jacobian matrix together with computing linear solves with PCG loses at most 0.32 nats in test log-likelihood with respect to the exact computation, but results in almost an order of magnitude speedup at prediction time.

\begin{figure}[t]
    \centering
    \includegraphics[width=0.9\columnwidth]{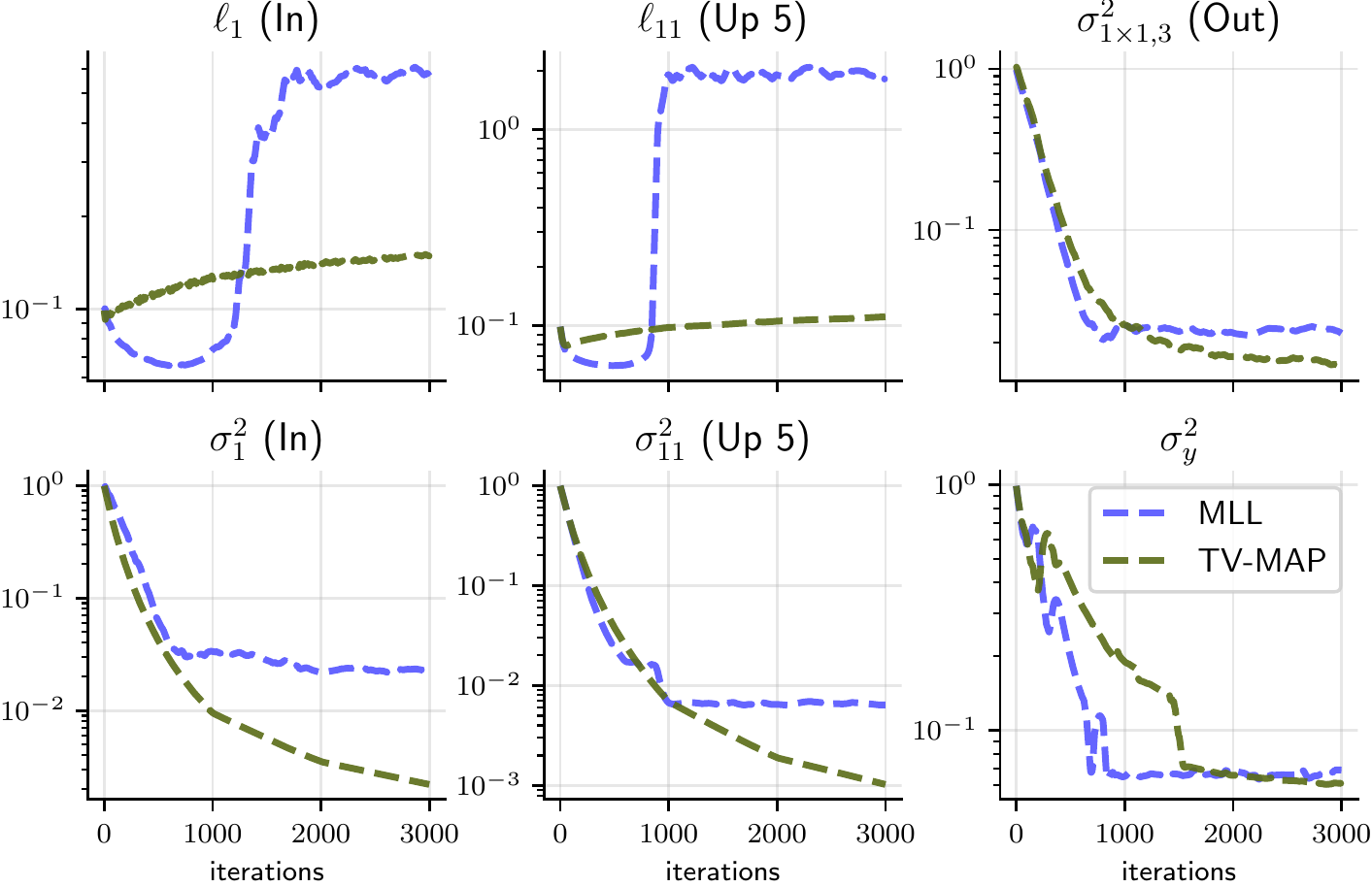}
    \caption{Optimisation trajectory for hyperparameters of the U-net's first and last $3\times3$ convolutions $(\ell_{1}, \sigma^2_{1}, \ell_{11},  \sigma^2_{11})$, last $1\times1$ convolution ($\sigma^2_{1\times1, 3}$) and noise variance $\sigma_y^{2}$, via MLL and Type-II MAP for the Walnut data.}
    \vspace{-1.25em}
    \label{fig:mrglik_opt_hyper-main}
\end{figure}

\subsection{Linearised DIP for high-resolution CT}\label{subsec:walnut-exps}

We now demonstrate our approach on real-measured cone-beam $\mu$CT data obtained by scanning walnuts, and released by \cite{der_sarkissian2019walnuts_data}.
We reconstruct a $501\times 501\,\text{px}^2$ slice ($d_x=251 \times 10^3$) from the first walnut of the dataset using a sparse subset of measurements taken from $60$ angles and $128$ detector rows ($d_y=7680$). Here, $\Sigma_{xx}$ is too large to store in memory and $\Sigma_{\data \data}$ is too expensive to assemble repeatedly. Furthermore, we use the deep architecture shown in \cref{fig:unet-diagram} which contains approximately 3 million parameters.
We thus resort to the approximate computations described in \cref{sec:computations}. Since the Walnut data is not quantised, jitter correction is not needed.

\Cref{fig:mrglik_opt_hyper-main} shows how Type-II MAP hyperparameters optimisation drives  $\sigma^{2}$ to smaller values, compared to MLL. 
This restricts the linearised DIP prior, and thus the induced posterior, to functions that are smooth in a TV sense, leading to smaller error-bars c.f.~\cref{fig:main_walnut_uq}.
During MLL and Type-II MAP optimisation, we observe that many layers' prior variance goes to $\approx 0$.  This phenomenon is known as ``automatic relevance determination'' \cite{Mackay1996BAYESIANNM,Tipping:2001}, and simplifies our linearised network, preventing uncertainty overestimation. We did not observe this effect when working with KMNIST images and smaller networks.
We display the MLL and MAP optimisation profiles for the active layers (i.e. layers with high $\sigma_d^2$) in \cref{fig:mrglik_opt_hyper-main}.
As the optimisation of \cref{eq:type2MAP} progresses, $\ell_1, \ell_{11}$ fall into basins of new minima corresponding to larger lengthscales. This results in more correlated dimensions in the prior, further simplifying~the~model.

\begin{table}[h!]
\centering
\caption{Test log-likelihood, PSNR and structural similarity (SSIM) on the Walnut. Comparing lin.-DIP against DIP-MCDO.}
\resizebox{\columnwidth}{!}{
\begin{tabular}{lccccc}
& \shortstack{$1\times 1$} & \shortstack{$2\times 2$} & \shortstack{$10\times 10$} & PSNR [dB] & SSIM\\
\hline
DIP-MCDO & 0.03 & 1.68 & 2.47 & \textbf{26.35} & 0.730\\
lin.-DIP (MLL) & 2.09 & 2.25 & 2.43 & \textbf{26.35} & \textbf{0.789}\\
lin.-DIP (MLL, $\tilde \mJ$ $\&$ PCG) & 1.88 & 2.05 & 2.24 & $-$ & $-$\\
lin.-DIP (TV-MAP) & \textbf{2.21} & \textbf{2.40} & \textbf{2.60} & $-$ & $-$\\
lin.-DIP (TV-MAP, $\tilde \mJ$ $\&$ PCG) & \textbf{2.24} & \textbf{2.46} & \textbf{2.65}  & $-$ & $-$\\
\hline
\end{tabular}
}
\label{tab:walnut_log_lik_and_metrics}
\end{table}

\begin{figure}
    \centering
    \vspace{-2.25em}
    \includegraphics[width=0.95\columnwidth]{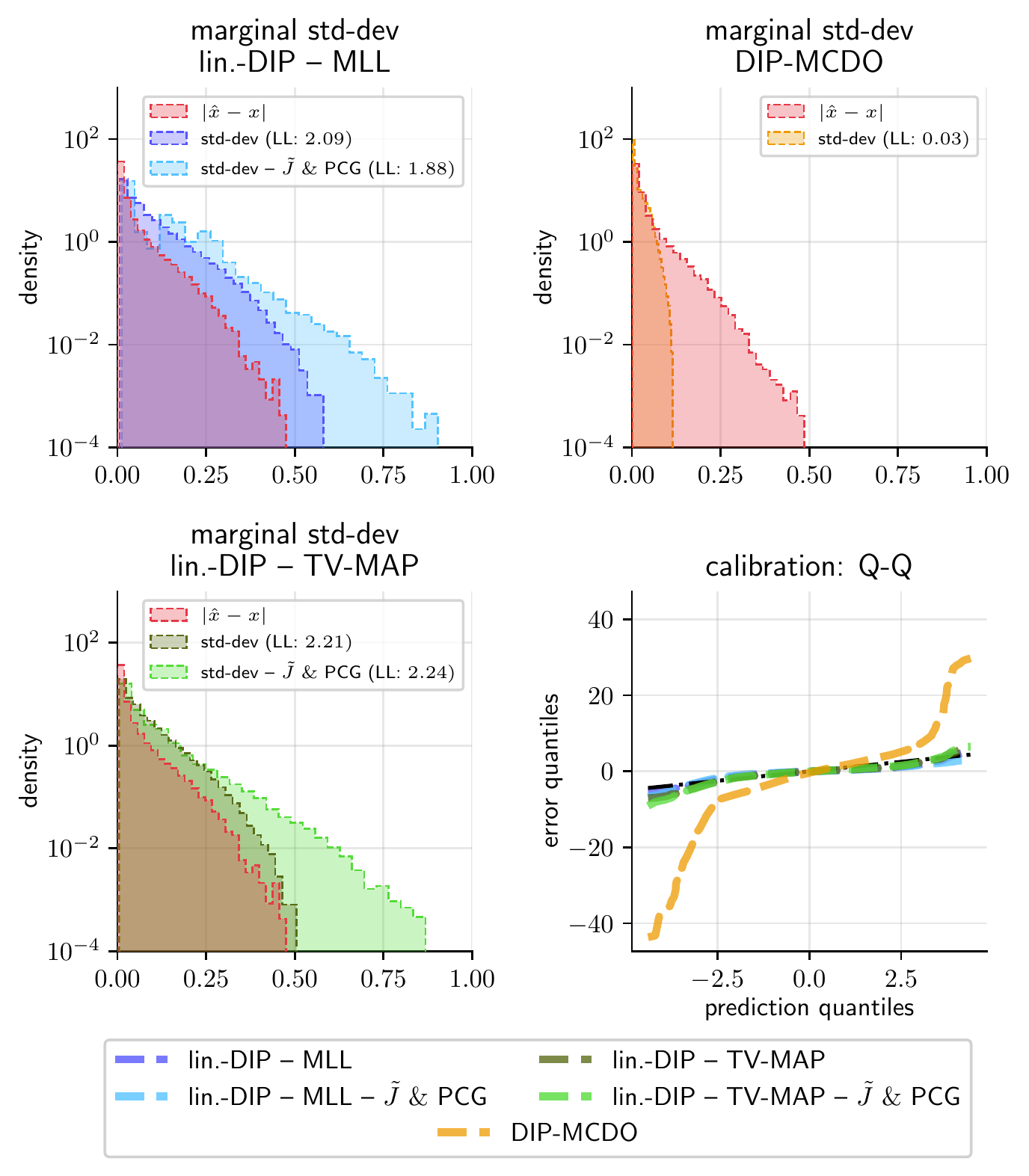}
   \caption{%
    The comparison of uncertainty calibration: the pixel-wise reconstruction absolute error $|\hat x - x|$ overlaps with the uncertainties provided by the lin.-DIP. DIP-MCDO, instead, severely underestimates uncertainty. The scale of the pixel-wise standard deviation (std-dev) obtained including the TV-PredCP matches the absolute error more closely than when the hyperparameters are optimised without. Using $\tilde J$ \& PCG results in overestimating uncertainty in the tails. LL stands for test log-likelihood.%
    }
    \vspace{-1.75em}
    \label{fig:main_walnut_uq}
\end{figure}

In \cref{tab:walnut_log_lik_and_metrics}, we report test log-likelihood computed using a Gaussian predictive distribution with covariance blocks of sizes  $1\times 1 $, $2\times 2$ and $10\times 10$ pixels, computed as described in \cref{apd:add_setup_walnut}. Mean reconstruction metrics are also reported.
Density estimation operations, described in \cref{sec:post_sampling}, are conducted in double precision (64 bit floating point) as we found single precision led to numerical instability in the assembly of $\Sigma_{\data \data}$, and also in the estimation of off-diagonal covariance terms for larger patches.
\Cref{fig:main_walnut} displays reconstructed images, uncertainty maps and calibration plots. In this more challenging tomographic reconstruction task, DIP-MCDO performs poorly relative to the standard regularised DIP formulation \cref{eq:DIP_MAP-obj2} in terms of reconstruction PSNR. Furthermore, the DIP-MCDO uncertainty map is blurred across large sections of the image, placing large uncertainty in well-reconstructed regions and vice-versa. In contrast, the uncertainty map provided by linearised DIP is fine-grained, concentrating on regions of increased reconstruction error. 
Quantitatively, linearised DIP provides over 2.06 nats per pixel improvement in terms of test log-likelihood and more calibrated uncertainty estimates, as reflected in the Q-Q plot. Furthermore, the use of our TV-PredCP based prior for MAP optimisation yields a 0.12 nat per pixel improvement over the MLL approach.

\section{Conclusion}\label{sec:conclusion}

We have proposed a probabilistic formulation of the deep image prior (DIP) that utilises the linearisation around the mode of the network weights and a Gaussian-linear hierarchical prior on the network parameters mimicking the total variation prior (constructed via predictive complexity prior). The approach yields well-calibrated uncertainty estimates on tomographic reconstruction tasks based on simulated observations as well as on real-measured $\mu$CT data.
The empirical results suggest that the DIP and the TV regulariser provide good inductive biases for both high-quality reconstructions and well-calibrated uncertainty estimates. The proposed method is shown to provide by far more calibrated uncertainty estimates than existing approaches to uncertainty estimation in DIP, like MC dropout.

\section*{Acknowledgements}
The authors would like to thank Marine Schimel, Alexander Terenin, Eric Nalisnick, Erik Daxberger and James Allingham for fruitful discussions. R.B. acknowledges support from the i4health PhD studentship (UK EPSRC EP/S021930/1), and from The Alan Turing Institute (UK EPSRC EP/N510129/1). J.L. was funded by the German Research Foundation (DFG; GRK 2224/1) and by the Federal Ministry of Education and Research via the DELETO project (BMBF, project number 05M20LBB). The work of BJ is partially supported by UK EPSRC grants EP/T000864/1 and EP/V026259/1. JMHL acknowledges support from a Turing AI Fellowship EP/V023756/1 and
an EPSRC Prosperity Partnership EP/T005386/1. JA acknowledges support from Microsoft Research, through its PhD Scholarship Programme, and from the EPSRC. This work has been performed using resources provided by the Cambridge Tier-2 system operated by the University of Cambridge Research Computing Service (http://www.hpc.cam.ac.uk) funded by EPSRC Tier-2 capital grant EP/T022159/1.

\bibliographystyle{acm}
\bibliography{bibliography}
\vspace{-4em}
\begin{IEEEbiography}[{\includegraphics[width=1in,height=1.25in,clip,keepaspectratio]{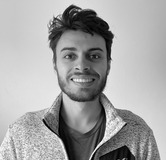}}]{Javier~Antorán} J. Antorán is a PhD student in the Machine Learning group within the Computational and Biological Learning Lab at the University of Cambridge, UK. Javier received his MPhil in Machine Learning (2019) from the University of Cambridge and his B.S. degree in Telecommunications engineering (2018) from the University of Zaragoza, Spain. Javier's research interests include probabilistic reasoning with neural networks, Gaussian processes and information theory.
\end{IEEEbiography}
\vspace{-5em}
\begin{IEEEbiography}[{\includegraphics[width=1in,height=1.25in,clip,keepaspectratio]{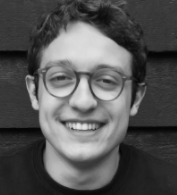}}]{Riccardo Barbano}
R. Barbano is a PhD student in the i4Health CDT at CMIC, University College London supervised by Professor Bangti Jin and Professor Simon Arridge. Previously, he obtained an MRes in Medical Imaging at University College London (2020), an MPhil in Machine Learning and Machine Intelligence at the University of Cambridge (2019) and an MEng in Engineering at Imperial College London (2018).
\end{IEEEbiography}
\vspace{-5em}
\begin{IEEEbiography}[{\includegraphics[width=1in,height=1in,clip]{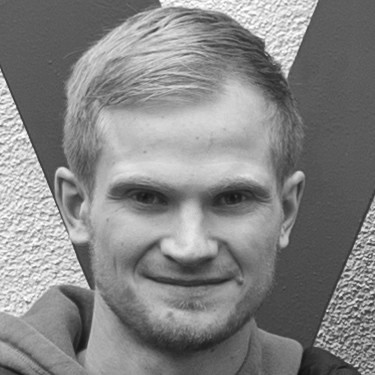}}]{Johannes Leuschner} J. Leuschner is a PhD student with the Research Training Group $\pi^3$ at the Center for Industrial Mathematics, University of Bremen, Germany, supervised by Professor Peter Maass. He received his MSc in Industrial Mathematics from the University of Bremen (2019). Johannes' research interests include deep learning methods for computed tomography.
\end{IEEEbiography}
\vspace{-5em}
\begin{IEEEbiography}[{\includegraphics[width=1in,height=1.25in,clip,keepaspectratio]{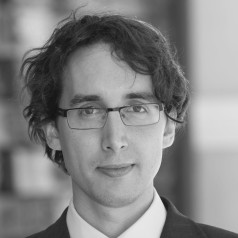}}]{José~Miguel~Hernández-Lobato}
is Professor of Machine Learning at the Engineering Department from University of Cambridge, UK. Before this, he was a postdoctoral fellow at Harvard University (2014--2016) and a postdoctoral research associate at University of Cambridge (2011--2014). He completed his Ph.D. (2010) and M.Phil. (2006) in Computer Science at Universidad Autónoma de Madrid (2010). He also holds a B.Sc. in Computer Science from this institution (2004). His research is on probabilistic machine learning and its applications to real-world problems.
\end{IEEEbiography}
\vspace{-5em}
\begin{IEEEbiography}[{\includegraphics[width=1in,height=1.25in,clip,keepaspectratio]{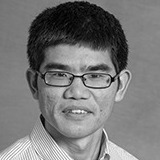}}]{Bangti Jin}
B. Jin received a PhD in Mathematics from the Chinese University of Hong Kong, Hong Kong in 2008. Previously, he was Lecturer and Reader, and Professor at Department of Computer Science, University College London (2014-2022), an assistant professor of Mathematics at the University of California, Riverside (2013–2014), a visiting assistant professor at Texas A\&M University (2010–2013), an Alexandre von Humboldt Postdoctoral Researcher at University of Bremen (2009–2010). Currently he is Professor of Mathematics at the Chinese University of Hong Kong.
\end{IEEEbiography}

\onecolumn
\renewcommand{\Alph}

\section*{Supplementary material}
\setcounter{section}{0}

\section{Designing total variation priors}

To develop a probabilistic DIP, we describe first how to design a tractable TV prior for computational tomography.
To this end, we reinterpret the TV regulariser \cref{eq:TV_equation} as a prior over images, favouring those with low $L^1$ norm gradients
\begin{gather}\label{eq:generic_tv_prior_apd}
   p(x) = Z_{\lambda}^{-1} \exp({-\lambda\TV(x)}),
\end{gather}
where $Z_{\lambda} = \int \exp(-\lambda\TV(x))\, {\rm d} x$. 
This prior is intractable because $Z_{\lambda}$ does not admit a closed form; thus approximations are necessary.
We now explore alternatives without this limitation.

\subsection{Further discussion on the TV regulariser as a prior}\label{apd:TV}

It is tempting to think that we do not need the PredCP machinery in \cref{subseq:tv_for_NN} to translate the TV regulariser into the parameter space. Indeed, the Laplace method simply involves a quadratic approximation around a mode of the log posterior, without placing any requirements on the prior used to induce said posterior. Hence, we can decompose the Hessian of the log posterior $\log p(\params|\data)$ into the contributions from the likelihood and the prior as
\begin{gather*}
    \dfrac{\partial}{\partial \theta^2} \left(\log p(\data | \op \net)  + \log p(\net) \right)|_{\params = \hat\params}
\end{gather*}
and quickly realise that the log of the anisotropic $\TV$ prior chosen to be $p(x) \propto \exp(-\lambda \TV(x))$ as in \cref{eq:generic_tv_prior_apd} is only once differentiable. Ignoring the origin (where the absolute value function is non-differentiable), we obtain:
\begin{gather*}
    \dfrac{\partial}{\partial \theta^2} \log p(\net)|_{\params = \hat\params} \propto - \dfrac{\partial}{\partial \theta^2} \TV(\net)|_{\params = \hat\params} = 0.
\end{gather*}
Thus, a naive application of the Laplace approximation would eliminate the effect of the prior, leaving the posterior ill defined. In practice, one may smooth the non-smooth region around the origin, but the amount of smoothing can significantly influence the behaviour of the Hessian approximation.

\subsection{Further discussion on inducing TV-smoothness with Gaussian priors}

A standard alternative to enforce local smoothness in an image is to adopt a Gaussian prior $p(x) = \mathcal{N}(x; \mu, \Sigma_{xx})$ with covariance $\Sigma_{xx} \in \BR^{d_{x} \times d_{x}}$ given by
\begin{gather*}\label{eq:Gaussian_over_x_apd}
     [\Sigma_{xx}]_{ij,i'j'} = \sigma^{2}\exp\left({\frac{-\dist(i-i', j-j')}{\ell}}\right),
\end{gather*}
where $i, j$ index the spatial locations of pixels of $x$, as in \cref{eq:TV_equation}, and $\dist(a, b) = \sqrt{a^{2} + b^{2}}$ denotes the Euclidean vector norm. \Cref{eq:Gaussian_over_x_apd} is also known as the Matern-$\nicefrac{1}{2}$ kernel and matches the covariance of Brownian motion \cite{maternmultidim}. 
The hyperparameter $\sigma^{2} \in \BR^{+}$ informs the pixel amplitude while the lengthscale parameter $\ell \in \BR^{+}$ determines the correlation strength between nearby pixels. 
The TV in \cref{eq:TV_equation} only depends on pixel pairs separated by one pixel ($\dist = 1$), allowing analytical computation of the expected TV associated with the Gaussian prior
\begin{gather}
    \kappa := \BE_{x\sim\mathcal{N}(x; \mu, \Sigma_{xx})}[\TV({x})] = c \sqrt{\sigma^{2} (1 - \rho)},
\end{gather}
with the correlation coefficient $\rho = \exp(-\ell^{-1}) \in (0, 1)$ and $c = \nicefrac{4  \sqrt{d_{x}}  (\sqrt{d_{x}}-1)}{\sqrt{\pi}}$ for square images. See Appendix \cref{apd:exact_ETV} for derivations.
Increasing $\ell$ (for a fixed $\sigma^{2}$) favours $x$ with low TV on average, resulting in smoother images. The prior $\mathcal{N}(x; \mu, \Sigma_{xx})$ is conjugate to the likelihood implied by the least-square fidelity $\mathcal{N}(\data; \op x, \sigma_{y}^{2} \mI)$, leading to a closed form posterior predictive distribution and marginal likelihood objective with costs $\mcO({d_{y}^{3}})$ and $\mcO({d_{y}^{2}d_{x}})$, respectively. Their expressions match those provided for the DIP in \cref{sec:probabilistic_model} of the main text.

\subsection{Derivation of the identity \cref{eq:exact_ETV}}\label{apd:exact_ETV}

The identity follows from the following result (appendix, \cite{McGrawWong:1994}). 
The short proof is recalled for the convenience of the reader.
\begin{lemma}\label{lem:absolute-diff}
Let $X$ and $Y$ be normal random variables with mean $\mu$, variance $\sigma^2$ and correlation coefficient $\rho$. Let $Z=|X-Y|$. Then 
\begin{equation*}
    \mathbb{E}[Z]=\frac{2}{\sqrt{\pi}}\sqrt{\sigma^2(1-\rho)}.
\end{equation*}
\end{lemma}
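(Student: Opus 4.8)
The plan is to reduce the computation to a single univariate Gaussian expectation. Since $X$ and $Y$ are jointly normal, their difference $W := X - Y$ is itself normally distributed, so I would first record its first two moments. Its mean is $\mathbb{E}[W] = \mu - \mu = 0$, and, using $\mathrm{Cov}(X,Y) = \rho\sigma^2$, its variance is
\begin{equation*}
\mathrm{Var}(W) = \mathrm{Var}(X) + \mathrm{Var}(Y) - 2\,\mathrm{Cov}(X,Y) = 2\sigma^2 - 2\rho\sigma^2 = 2\sigma^2(1-\rho).
\end{equation*}
Thus $W \sim \mathcal{N}\big(0,\, 2\sigma^2(1-\rho)\big)$, and $Z = |W|$ follows a half-normal (folded-normal) law centred at the origin.

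The second step is to evaluate $\mathbb{E}[|W|]$ for a centred Gaussian $W \sim \mathcal{N}(0, \tau^2)$. I would compute this directly from the density: by symmetry,
\begin{equation*}
\mathbb{E}[|W|] = 2\int_0^\infty \frac{w}{\sqrt{2\pi}\,\tau}\exp\!\Big(-\frac{w^2}{2\tau^2}\Big)\,\mathrm{d}w = \tau\sqrt{\frac{2}{\pi}},
\end{equation*}
where the integral is elementary (e.g.\ substitute $u = w^2/(2\tau^2)$). Substituting $\tau = \sqrt{2\sigma^2(1-\rho)}$ then yields
\begin{equation*}
\mathbb{E}[Z] = \sqrt{2\sigma^2(1-\rho)}\cdot\sqrt{\frac{2}{\pi}} = \frac{2}{\sqrt{\pi}}\sqrt{\sigma^2(1-\rho)},
\end{equation*}
which is exactly the claimed identity.

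There is essentially no serious obstacle here; the one point that warrants care is the joint-normality assumption, which is what guarantees that the linear combination $X - Y$ is Gaussian (marginal normality of $X$ and $Y$ individually would not be enough to conclude this). Everything else reduces to the standard moment computation for the difference of two correlated Gaussians together with the elementary half-normal mean $\tau\sqrt{2/\pi}$, after which the result follows by direct substitution.
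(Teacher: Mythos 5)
Your proposal is correct and follows essentially the same path as the paper: both reduce to the observation that $X-Y\sim\mathcal{N}(0,\,2\sigma^2(1-\rho))$ and then compute the first absolute moment of a centred Gaussian, the only cosmetic difference being that the paper evaluates that moment by passing through the $\chi_1^2$ distribution and a Gamma-function integral whereas you integrate the half-normal density directly. Your remark that joint normality (not just marginal normality) is what licenses the first step is a point the paper glosses over with ``clearly,'' and it is satisfied in the paper's application since the pixel pairs are marginals of a multivariate Gaussian.
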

\begin{proof}
Clearly, $X-Y$ follows a Gaussian distribution with mean 0 and variance $2\sigma^2(1-\rho)$. Then the random variable 
\begin{equation*}
W=\frac{Z^2}{2\sigma^2(1-\rho)}=\Big(\frac{X-Y}{\sqrt{2\sigma^2(1-\rho)}}\Big)^2
\end{equation*}
follows $\chi_1^2$ distribution. Then 
\begin{equation*}
    \mathbb{E}[\sqrt{W}]=\int_0^\infty W^\frac12 \frac{1}{\Gamma(\frac12)\sqrt{2}}W^{\frac{1}{2}-1}e^{-\frac{W}{2}}\mathrm{d}W = \frac{\sqrt{2}}{\Gamma(\frac12)}=\frac{\sqrt{2}}{\sqrt{\pi}},
\end{equation*}
where $\Gamma(z)$ denotes the Euler's Gamma function, with $\Gamma(\frac12)=\sqrt{\pi}$. Then it follows that
\begin{equation*}
    \mathbb{E}[Z]=\sqrt{2\sigma^2(1-\rho)}\mathbb{E}[\sqrt{W}]=\frac{2}{\sqrt{\pi}}\sqrt{\sigma^2(1-\rho)}.
\end{equation*}
This shows the assertion in the lemma.
\end{proof}

Now by the marginalisation property of multivariate Gaussians, any two neighbouring pixels of $x$ for $x\sim\mathcal{N}(x; \mu, \Sigma_{xx})$ satisfy the conditions of Lemma \ref{lem:absolute-diff}, with $\rho=\exp(-\ell^{-1}) \in (0, 1)$. Thus Lemma \ref{lem:absolute-diff} and the trivial fact $d_x = h \times w$ imply 
\begin{equation*}
    \kappa_d= \BE_{\mathcal{N}(x;\mu, \Sigma_{xx})}[\TV({x})]=\frac{2[2hw-h-w]}{\sqrt{\pi}}\sqrt{\sigma^2(1-\rho)}.
\end{equation*}
In particular, for a square image, $h=w=\sqrt{d_x}$, we obtain the desired identity in \cref{eq:exact_ETV}.

\section{Derivation of the Bayes deep image prior}\label{apd:inference}

\subsection{Posterior predictive covariance}\label{apd:inference_on_11}

We provide an alternative derivation of the posterior predictive covariance of the linearised DIP by reasoning in the parameter space. 
First we have linearised the neural network $\net$, turning it into a Bayesian basis function linear model \cite{Khan19approximate}. The probabilistic model in \cref{eq:Model_weight_space} is thus:
\begin{gather*}
    \data|\params \sim \mathcal{N}(\data; \op h(\params), \sigma_{y}^{2} \mI), \quad \params|\ell \sim \mathcal{N}\left(\params; 0, \Sigma_{\params\params}\right),
\end{gather*}
and the linearised Laplace approximate posterior distribution over weights is given by \cite{Immer2021improving}
\begin{gather}
    p(\params | \data) \approx \mathcal{N}(\params; \hat\params, \Sigma_{\params|\data}) \spaced{with} \Sigma_{\params|\data} = \left( \sigma_{y}^{-2} \mJ^{\top} \op^{\top} \op  \mJ  + \Sigma_{\params\params}^{-1}\right)^{-1}.
\end{gather}
In this work we exploit the equivalence between basis function linear models and Gaussian Processes (GP), and perform inference using the dual GP formulation. 
This is advantageous due to its lower computational cost when $d_{\theta} >> d_{y}$, which is common in tomographic reconstruction.

We switch to the dual formulation using the SMW matrix inversion identity, we have
\begin{equation}
   \Sigma_{\params|\data} = \left( {\sigma_{y}^{-2}} \mJ^{\top} \op^{\top} \op  \mJ   + \Sigma_{\params\params}^{-1}  \right)^{-1} = \Sigma_{\params\params} - \Sigma_{\params\params}\mJ^{\top} \op^{\top}(\sigma_{y}^{2} \mI + \op \mJ \Sigma_{\params\params}^{-1} \mJ^{\top} \op^{\top})^{-1}\op \mJ \Sigma_{\params\params}
\end{equation}
The predictive distribution over images can be built by marginalising the NN parameters in the conditional likelihood $ p(x|\data) = \int p(x | \params)p(\params | \data)\,{\rm d}\params$. Because $h(\cdot)$ is a deterministic function, we have that $p(x | \params) = \delta(x - h(\params))$ and thus
\begin{gather*}
    \int p(x | \params)p(\params | \data)\,{\rm d}\params = \int \delta(x - h(\params)) \mathcal{N}(\params; \hat\params, \Sigma_{\params|\data}) \,{\rm d}\params = \mathcal{N}(x; \hat x, \mJ \Sigma_{\params|\data} \mJ^{\top}).
\end{gather*}
Note that this assumes $\hat \theta$ to be a mode of the DIP training loss \cref{eq:DIP_MAP-obj2}. In practise, this will not be satisfied and thus the posterior mean of the linear model $\hat \theta_h$, which is given as the minima of the linear model's loss introduced in \cref{subsec:marginal_likelihood}, will not match that of the NN, that is, $\hat \theta$. Using the linear model's exact mode is only necessary for the purpose of constructing the marginal likelihood objective \cite{antoran2021fixing,antoran2022Adapting} (see also \cref{apd:inference_on_12}). However, for the purpose of making predictions, assuming $\hat \theta$ to be the mode allows us to keep the DIP reconstruction $\hat x$ as the predictive mean.

\subsection{Laplace marginal likelihood and Type-II MAP in \cref{eq:type2MAP}}\label{apd:inference_on_12}

For the purpose of uncertainty estimation, we tune the hyperparameters of our linear model using the marginal likelihood of the conditional-on-$\ell$ Gaussian-linear model introduced in \cref{eq:linearised_model}. The posterior mode of the TV-regularised linearised model is given by $\hat \theta_h = \argmin_{\theta_{h}} \sigma_{y}^{-2}\|\op h(\theta_{h}) - y\| + \lambda \TV({h}(\params_h))$. However, we substitute the TV with a multivariate Gaussian surrogate $p(\theta | \ell)$. Now we derive the marginal log-likelihood (MLL) for the linearised model conditional on $\ell$ following \cite{antoran2022Adapting}. We start from Bayes rule
\begin{gather*}
   \log p(\theta | \data, \ell;  \sigma_y^{2}, \sigma^{2}) = \log p(\data | \theta; \sigma_y^{2} ) + \log p(\theta | \ell; \sigma^{2} )  - \log  \,p(\data |\ell; \sigma^{2}_{y}, \sigma^{2} ).
\end{gather*}
We now isolate the MLL $\log  \,p(\data |\ell; \sigma^{2}_{y}, \sigma^{2} )$ and evaluate at the linear model's posterior mode $\theta = \hat \theta_h$ and obtain 
\begin{gather}
    \log p(\data |\ell; \sigma^{2}_{y}, \sigma^{2} ) = \log p(\data | \theta{=}\hat \theta_h; \sigma_y^{2} ) + \log p(\theta{=}\hat\params_h | \ell; \sigma^2) - \log  p(\theta{=}\hat\params_h | \data, \ell;  \sigma_y^{2}, \sigma^{2}).
\end{gather}
The observation log-density $\log p(\data | \theta=\hat\params_h; \sigma_{y}^2)$ quantifies the quality of the model's fit to the data. It is given by
\begin{align*}
     \log p(\data | \theta=\hat\params_h; \sigma_{y}^2) &= - \frac{d_{y}}{2}\log (2\pi)-  \frac{1}{2} \log |\sigma^{2}_{y} \mI| - \frac{1}{2\sigma^{2}_{y}} \|\data - Ah(\hat \theta_h)\|^{2}_{2}.
 \end{align*}
However, since our predictive mode is given by the DIP reconstruction and not the linear model's reconstruction, we make a practical departure from the exact expression for the linear model's MLL and use 
$- \frac{d_{y}}{2}\log (2\pi)-  \frac{1}{2} \log |\sigma^{2}_{y} \mI| - \frac{1}{2\sigma^{2}_{y}} \|\data - \op\netmap\|^{2}_{2}$
as the data fit term instead. The weight-mode log prior density $ \log p(\theta{=}\hat\params_h | \ell, \sigma^2)$ is given by
\begin{align*}
    \log p(\theta{=}\hat\params_h | \ell, \sigma^2) &= - \frac{d_{\theta}}{2}\log (2\pi)-  \frac{1}{2} \log |\Sigma_{\params\params}| - \frac{1}{2}
    \hat \params_{h}^{\top} \Sigma^{-1}_{\params\params}\hat \params_{h}.
\end{align*}
Evaluating the Gaussian posterior log density over $\theta$ at its mode $\hat \theta_h$ cancels the exponent of the Gaussian and leaves us with just the normalising constant
\begin{gather*}
        \log  p(\theta{=}\hat\params_h | \data, \ell;  \sigma_y^{2}, \sigma^{2}) = -\frac{1}{2}\log |\Sigma_{\theta | y}| -\frac{d_{\theta}}{2} \log (2\pi)
\end{gather*}

By the matrix determinant lemma, the determinant $|\Sigma_{\theta|y}|$ is given by
\begin{gather}
   |\Sigma_{\theta|y}| = |\sigma_{y}^{-2}  \mJ^{\top} \op^{\top} \op \mJ   + \Sigma_{\params\params}^{-1}|^{-1} = | \op \mJ \Sigma_{\params\params} \mJ^{\top} \op^{\top} + \sigma_{y}^{2} \mI |^{-1} |\Sigma_{\params\params}|   |\sigma_{y}^{2} \mI|.
\end{gather}
Thus, the linearised Laplace marginal likelihood is given by 
\begin{align}
    \log p(\data | \ell; \sigma_{y}^{2}, \sigma^{2}) 
    =& -  \frac{1}{2} \log |\sigma^{2}_{y} \mI| - \frac{1}{2\sigma^{2}_{y}} \|\data - \op\netmap\|^{2}_{2} 
    -  \frac{1}{2} \log |\Sigma_{\params\params}| -  \frac{1}{2} \hat \params_{h}^{\top} \Sigma^{-1}_{\params\params} \hat \params_{h} \nonumber \\
    &-  \frac{1}{2} \log | \op \mJ \Sigma_{\params\params} \mJ^{\top} \op^{\top} + \sigma_{y}^{2} \mI | +  \frac{1}{2} \log |\Sigma_{\params\params}| +   \frac{1}{2} \log |\sigma_{y}^{2} \mI| + C\notag \\
    =& - \frac{1}{2 \sigma^{2}_{y} } ||\data - \op\netmap||^{2}_{2} - \frac{1}{2} \hat \params_{h}^{\top} \Sigma^{-1}_{\params\params}\hat \params_{h} - \frac{1}{2} \log | \op \mJ \Sigma_{\params\params} \mJ^{\top} \op^{\top} + \sigma_{y}^{2} \mI | + C\label{eq:derived_MLL}
\end{align}
where $C$ captures all terms constant in $(\sigma_{y}^{2}, \ell, \sigma^{2})$. Recall that $\Sigma_{\data \data} = \op \mJ \Sigma_{\params\params} \mJ^{\top} \op^{\top} + \sigma_{y}^{2} \mI$. 
Next we turn to the TV-PredCP prior over $\ell$
\begin{gather*}
    \log p(\ell; \sigma^{2}) = 
    - \sum_{d=1}^{D}\kappa_{d} + \log \left|\frac{\partial \kappa_{d}}{\partial \ell_{d}}\right|, \,\, \text{with} \,\,
    \kappa_{d} := \BE_{
    \mathcal{N}(\theta_d; \hat \theta_d, \Sigma_{\theta_d\theta_d})
    \prod^{D}_{i=1,i\neq d}\delta(\params_{i} - \hat \params_{i})}\left[{\lambda \TV}({h}(\params)) \right].
\end{gather*}
Hence we obtain the following Type-II maximum a posteriori (MAP)-style objective:
\begin{align*}
    &\log p(\data, \ell; \sigma^{2}_{y}, \sigma^{2}) \approx \log \mathcal{N}(\data; {0}, \Sigma_{\data \data}) + \log p(\ell; \sigma^{2}) \notag \\
    = & \frac{1}{2}\left(-\sigma^{-2}_{y}||\data - \op\netmap||_{2}^{2} - \hat \params_{h}^{\top} \Sigma^{-1}_{\params\params}\hat \params_{h} - \log |\Sigma_{\data \data}| \right)
    - \sum_{d=1}^{D} \kappa_{d} + \log \left|\frac{\partial \kappa_{d}}{\partial \ell_{d}}\right| +  C.
\end{align*}

\section{Additional details on our TV-PredCP}

\subsection{Correspondence to the formulation of \cite{nalisnick2021predicitve}}\label{app:predcp_relate_to_original}

The original formulation of the TV-PredCP \cite{nalisnick2021predicitve} defines a base model $q(x) = p(x|a=a_{0})$ and an extended model $p(x) = p(x|a = \tau)$. The (hyper)parameter $\tau$ determines how much the predictions of the two models vary. A divergence $\mathcal{D}(p(x|a=a_{0}) || p(x|a = \tau))$ is placed between the two distributions and a prior placed over the divergence. This divergence is mapped back to the parameter $\tau$ using the change of variables formula. To see how our approach \cref{eq:prior_ell} falls within this setup, take $p(x|a=\tau)$ to be $p(x) = \mathcal{N}(x; {\mu}, \Sigma_{xx}(\sigma^{2}, \ell))$, where the lengthscale $\ell$ takes the place of $\tau$.
The base model sets the lengthscale to be infinite, or equivalently the correlation coefficient $\rho$ to be 1,  $q(x) = \mathcal{N}(x; {\mu}, \Sigma_{xx}(\sigma_{x}^{2}, \infty))$. As a divergence, we choose $\mathcal{D}(p, q) = \BE_{p}[\text{TV}(x)] - \BE_{q}[\text{TV}(x)]$. We have defined our base model to be one in which all pixels are perfectly correlated and thus have the same value. This results in the expected TV for this distribution taking a value of 0.  We end up with our divergence simply matching the expected TV under the extended model $\BE_{\mathcal{N}(x; {\mu}, \Sigma_{xx})}[\text{TV}(x)]$. Even when an expected TV of 0 is not attainable for any value of $\ell$, as is the case when using the DIP \cref{eq:ell_prior_DIP}, there still exists a base model which will be constant with respect to our parameters of interest and can be safely ignored. 

\subsection{An upper bound on the expected TV}\label{app:predcp_upper_bound}

To ensure dimensionality preservation, we define our prior over $\ell$ in \cref{eq:ell_prior_DIP} as a product of TV-PredCP priors, one defined for every convolutional block in the CNN, indexed by $d$,
\begin{gather*}
    p({\ell}) = p(\ell_{1}) p(\ell_{2})\,...\,p(\ell_{D}) =\prod_{d=1}^{D}\pi(\kappa_{d})\left|\frac{\partial \kappa_{d}}{\partial \ell_{d}}\right|,
    \text{with }\kappa_{d} := \BE_{\mathcal{N}(\theta_d;\hat \theta_d, \Sigma_{\theta_d\theta_d})\prod^{D}_{i=1,i\neq d}\delta(\params_{i} - \hat \params_{i})}\left[{ \TV}({h}(\params)) \right].
\end{gather*}
This formulation differs from the expected TV introduced in \cref{eq:exact_ETV}, which does not discriminate by blocks $\kappa := \BE_{\mathcal{N}(\theta;\hat \theta, \Sigma_{\theta\theta})}\left[{ \TV}({h}(\params)) \right]$.
By the triangle inequality, $\sum_{d} \kappa_{d}$ is an upper bound on the expectation under the joint distribution 
\begin{gather*}
 \BE_{\mathcal{N}(\theta;\hat \theta, \Sigma_{\theta\theta})}\left[{\TV}({h}(\params)) \right] = \sum_{(i,j) \in \mathcal{S}} \BE_{\mathcal{N}(\theta;\hat \theta, \Sigma_{\theta\theta})}\left[|(\mJ_{i} \params -  \mJ_{j} \params) | \right] = \sum_{(i,j) \in \mathcal{S}} \BE_{\mathcal{N}(\theta;\hat \theta, \Sigma_{\theta\theta})}\left[| \sum_{d} (\mJ_{id} -  \mJ_{jd}) \params_{d}) | \right]\\
\leq \sum_{(i,j) \in \mathcal{S}} \sum_{d} \BE_{\mathcal{N}(\theta_d;\hat \theta_d, \Sigma_{\theta_d\theta_d})}\left[| (\mJ_{id} -  \mJ_{jd}) \params_{d} | \right] =  \sum_{d} \BE_{\mathcal{N}(\theta_d;\hat \theta_d, \Sigma_{\theta_d\theta_d}) \prod^{D}_{c=1,c\neq d}\delta(\params_{c} - \hat \theta_c)}\left[\sum_{(i,j) \in \mathcal{S}} | (\mJ_{i} -  \mJ_{j}) \params | \right] = \sum_{d} \kappa_{d},
\end{gather*}
where $\mathcal{S}$ is the set of all adjacent pixel pairs. 
Thus, the separable form of the TV prior as a regulariser for MAP optimisation ensures that the expectated TV under the joint distribution of parameters is also being regularised.

\subsection{Discussing monotonicity of the TV in the prior lengthscales}\label{apd:monotonicity}

In order to apply the change of variables formula in \cref{eq:ell_prior_DIP}, we require bijectivity between $\ell_{d}$ and $\kappa_{d}$. In the simplest setting, both variables are one-dimensional, making this constraint easier to satisfy. In fact, it suffices to show monotonicity between the two.
\begin{figure*}
    \centering
    \vspace{-0.1cm}
    \includegraphics[trim={0 0.85cm 0 0},clip, width=0.95\linewidth]{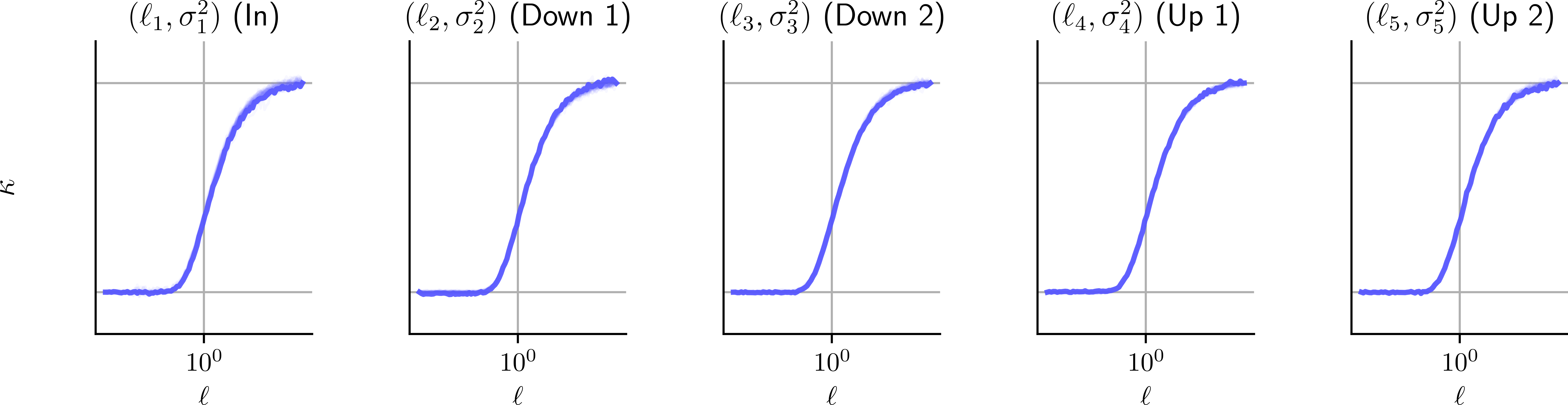}\\[0.5em]
    \includegraphics[width=0.95\linewidth, trim={0 0 0 0.45cm},clip]{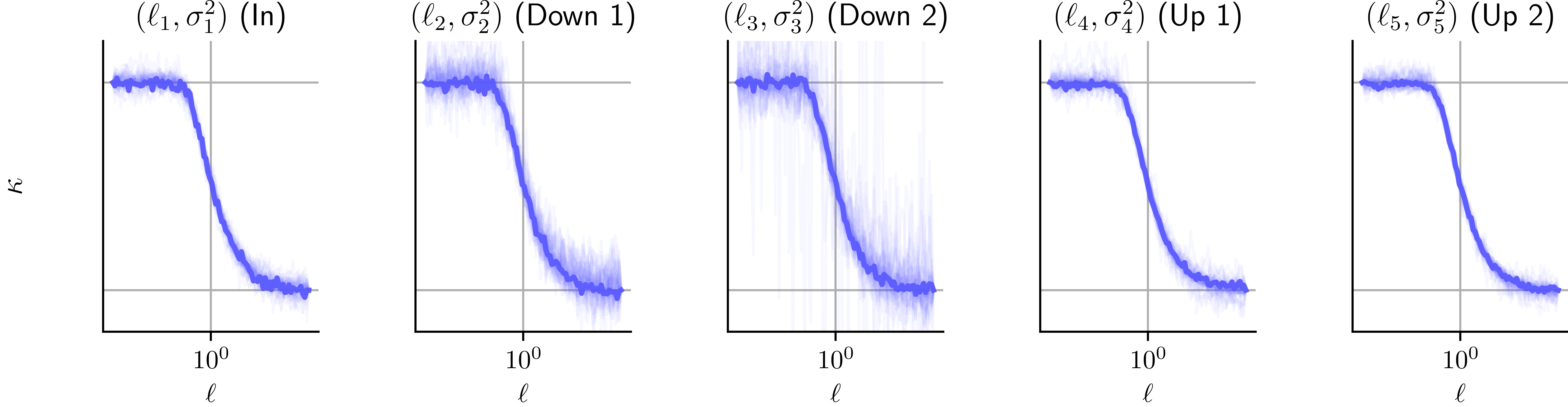}
    \vspace{-0.5cm}
    \caption{
    Experimental evidence of monotonicity computed over 50 KMNIST test images for the linearised network used in the KMNIST experiments. Horizontal axis represents lengthscale $\ell \in [0.01, 100]$. $\kappa$ is estimated with $10$k Monte Carlo samples. In the bottom row we fix the marginal variances of $J\Sigma_{\data\data}J^\top$ in image space to be 1. This allows us to observe the smoothing effect from $\ell$. We use the first and last value to normalise over different KMNIST sample. The monotonicity implies the desired invertibility of the mappings $\ell$ and $\kappa$. We draw 500 samples to estimate $k$.}
    \label{fig:monotonicity_linear_regime}
\end{figure*}
In practice, we use the linearised model in \cref{eq:linearised_model} for inference. In \cref{fig:monotonicity_linear_regime}, we show very compelling numerical evidence for the monotonicity. We observe that $\kappa$ increases in $\ell$ since large values for $\ell$ lead to an increased marginal variance $\sigma^2$ over images. After fixing the marginal variance to 1, the lengthscales have a monotonically decreasing relationship with the expected TV.
However, analytically studying the monotonicity remains delicate. 
We investigate the issue in the linear setting to she insights (which also matches our experimental setup):
\begin{equation}
    \kappa_{d} = \BE_{\mathcal{N}(\theta; \hat \theta, \Sigma_{\theta\theta})\prod^{D}_{j=1,j\neq d}\delta(\params_{j} - \hat \params_{j})}[ \TV({h}(\params))] 
    = \BE_{\mathcal{N}(\theta; \hat \theta, \Sigma_{\theta\theta})\prod^{D}_{j=1,j\neq d}\delta(\params_{j} - \hat \params_{j})}\Big[ \sum_{i} |{h}(\params)_{i} - {h}(\params)_{i+1} |\Big],\label{eq:k_d-ell}
\end{equation}
assuming that the output is a 1D signal so there is only one derivative to simplify the discussion.
First we derive the distribution of ${h}(\params)_{i} - {h}(\params)_{i+1}$. Note that ${h}(\params)$ can be written as $ {h}(\params) = {h}_0 + \mJ(\params - \hat\params)$, by slightly abusing the notation ${h}_0$ to denote the vectors constant with respect to $\ell_{d}$ and $i$ indices an entry of the vector $(\mJ \params) \in \BR^{d_x}$. Note that the constant vector ${h}_0$ depends on the choice of the based point $\params=0$ (or equally plausible $\params=\hat \params$),  but it does not play a role in ${\rm TV}({h}(\params))$, since it cancels out from the definition of ${\rm TV}({h}(\params))$. Then, we can rewrite it as an inner product between two vectors
\begin{gather*}
    {h}(\params)_{i} - {h}(\params)_{i+1}= (\mJ \params)_{i} - (\mJ \params)_{i+1} = (\mJ_{i} - \mJ_{i+1})\params_d = {v}_{i} \params_d,
\end{gather*}
where $\mJ_{i} \in \BR^{1\times d_{\theta_d}}$ denotes our NN's Jacobian for a single output pixel $i$ (i.e. the $i$th row of the Jacobian matrix $\mJ$, corresponding to the block parameters $\params_d\in \mathbb{R}^{1 \times d_{\theta_d}}$) and ${v}_{i} = \mJ_{i} - \mJ_{i+1}\in\mathbb{R}^{1\times d_{\theta_d}}$, $i=1,\ldots,d_x-1$.
Now, the block parameters $\params_{d}$ is distributed as 
\begin{equation*}
    \params_d \sim \mathcal{N}(\params_{d}; {0}, \Sigma_{\theta_d \theta_d}),
\end{equation*} 
in the expectation in \cref{eq:k_d-ell}, whereas the remaining parameters are fixed at the mode $\hat \params_j$, $j\neq d$, i.e. $\prod^{D}_{j=1,j\neq d}\delta(\params_{j} - \hat \params_{j}) $. Let ${V}_{d}\in \mathbb{R}^{ ( d_{x}-1)\times d_{\theta_d}}$ correspond to the stacking of the vectors ${v}_i\in \mathbb{R}^{1\times d_{\theta_d}}$, i.e. the Jacobian of the network output with respect to the weights in convolutional group $d$. Since the affine transformation of a Gaussian distribution remains Gaussian, ${V}_d\params_d$ is distributed according to
${V}_{d} \params_d \sim \mathcal{N}({0}, {V}_{d} \Sigma_{\theta_d \theta_d} {V}_{d}^{\top})$.
Note that the matrix ${V}_{d} \Sigma_{\theta_d \theta_d} {V}_{d}^{\top}$ is not necessarily invertible, and if not, as usual, the inverse covariance should be interpreted in the sense of pseudo-inverse.
Let ${a} =: {V}_{d}\params_d \in \BR^{d_x-1}$. Then
\begin{gather*}
    \kappa_{d} = \BE_{{a} \sim \mathcal{N}(a; {0}, {V}_{d} \Sigma_{\theta_d \theta_d} {V}_{d}^{\top})}\Big[\sum_{i} |a_{i}|\Big] = \sum_{i} \BE_{a_{i} \sim \mathcal{N}(a_i; {0}, {v}_{i} \Sigma_{\theta_d \theta_d} {v}_{i}^{\top})}[ |a_{i}|].
\end{gather*}
The distribution of $|a_{i}|$ follows a half-normal distribution, and there holds (cf.~eq.~(3) of \cite{LeoneNelson:1961})
\begin{equation*}
    \mathbb{E}_{a_i\sim \mathcal{N}({0}, {v}_{i} \Sigma_{\theta_d \theta_d} {v}_{i}^{\top})}[|a_i|]=\sqrt{\frac{2}{\pi}}({v}_{i} \Sigma_{\theta_d \theta_d} {v}_{i}^{\top})^\frac12.
\end{equation*}
Consequently,
\begin{equation}\label{eq:kappa_d-exp}
    \kappa_{d} =  \sqrt{\frac{2}{\pi}}\sum_{i}({v}_{i} \Sigma_{\theta_d \theta_d} {v}_{i}^{\top})^\frac12\quad \mbox{and}\quad \frac{\partial\kappa_{d}}{\partial\ell_d} = \sqrt{\frac{1}{2\pi}}\sum_{i} ({v}_{i} \Sigma_{\theta_d \theta_d} {v}_{i}^{\top})^{-\frac12} {v}_{i} \frac{\partial}{\partial\ell_d}\Sigma_{\theta_d \theta_d} {v}_{i}^{\top}.
\end{equation}
It remains to examine the monotonicity of ${v}_{i} \Sigma_{\theta_d \theta_d} {v}_{i}^{\top}$ in $\ell_{d}$. Indeed, by the definition of $\Sigma_d$, direct computation gives
\begin{equation*}
\frac{\partial}{\partial\ell_d} [\Sigma_{\theta_d \theta_d}(\ell_d)]_{j,j'}=\frac{\partial}{\partial\ell_d}\sigma_d^2\exp\Big(-\frac{\dist(j,j')}{\ell_d}\Big)=\frac{\sigma_d^2\dist(j,j')}{\ell_d^2}\exp\Big(-\frac{\dist(j,j')}{\ell_d}\Big),
\end{equation*}
and thus 
\begin{equation*}
    \frac{\partial}{\partial\ell_d}{v}_{i} \Sigma_{\theta_d \theta_d} {v}_{i}^{\top}=\frac{\sigma_d^2}{\ell_d^2}\sum_{j}\sum_{j'}{v}_{i,j}\dist(j,j')\exp\Big(-\frac{\dist(j,j')}{\ell_d}\Big) {v}_{i,j'}.
\end{equation*}
Then it follows that if the vectors ${v}_i$ were arbitrary, the monotonicity issue would rest on the positive definiteness of the associated derivative kernel. For example, for a Gaussian kernel $e^{-\frac{(x-y)^2}{\ell_d}}$ (i.e. $\dist$ is the squared Euclidean distance), the associated kernel $k(x,y)$ is given by $(x-y)^2e^{-\frac{(x-y)^2}{\ell_d}}$. This issue seems generally challenging to verify directly, since $(x-y)^2$ is not a positive semidefinite kernel by itself on $\mathbb{R}$, even though the Gaussian kernel $e^{-\frac{(x-y)^2}{\ell_d}}$ is indeed positive semidefinite. Thus, one cannot use the standard Schur product theorem to conclude the monotonicity. Alternatively, one can also compute the Fourier transform of the kernel $k(x)=x^2e^{-x^2}$ directly, which is given by 
\begin{equation*}
    \mathcal{F}[k(x)](\omega)=\frac{2-\omega^2}{4}\frac{1}{\sqrt{2}}e^{-\frac{\omega^2}{4}}.
\end{equation*}
see the proposition below for the detailed derivation.
Clearly, the Fourier transform of the kernel $x^2e^{-x^2}$ is not positive over the whole real line $\mathbb{R}$. By Bochner's theorem (see e.g. p. 19 of \cite{Rudin:1990}), this kernel is actually not positive.
The fact that the kernel is no longer positive definite makes the analytical analysis challenging.  This observation holds also for the Matern-$\nicefrac{1}{2}$ kernel, see the proposition below. These observations clearly indicate the risk for a potential non-monotonicity in $\ell$. Nonetheless, we emphasise that this condition is only sufficient, but not necessary, since the kernel is only evaluated at lattice points (instead of arbitrary scattered points). We leave a full investigation of the monotonicity to a future work, given the compelling empirical evidence for monotonicity in both the NN and linearised settings. 

The next result collects the Fourier transforms of the associated kernel for the Gaussian and Matern-$\nicefrac{1}{2}$ kernels.
\begin{proposition}
The Fourier transforms of the functions $x^2e^{-x^2}$ and $|x|e^{-|x|}$ are given by
\begin{equation*}
    \mathcal{F}[x^2e^{-x^2}](\omega) = \frac{2-\omega^2}{4\sqrt{2}}e^{-\frac{x^2}{4}}\quad \mbox{and}\quad 
    \mathcal{F}[|x|e^{-|x|}](\omega)=\frac{2(1-\omega^2)}{\sqrt{2\pi}(1+\omega^2)^2}.
\end{equation*}
\end{proposition}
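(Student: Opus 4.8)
The plan is to fix the unitary Fourier convention $\mathcal{F}[f](\omega) = \frac{1}{\sqrt{2\pi}}\int_{\mathbb{R}} f(x)e^{-i\omega x}\,dx$, which is the one consistent with the normalising constants appearing in the statement. For the Gaussian-type kernel $x^2 e^{-x^2}$, I would start from the classical transform $\mathcal{F}[e^{-x^2}](\omega) = \frac{1}{\sqrt{2}}e^{-\omega^2/4}$ and invoke the multiplication rule: differentiating $\mathcal{F}[f](\omega)$ under the integral sign shows $\mathcal{F}[x f(x)] = i\,\frac{d}{d\omega}\mathcal{F}[f]$, and hence $\mathcal{F}[x^2 f(x)] = -\frac{d^2}{d\omega^2}\mathcal{F}[f]$. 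Applying this with $f(x)=e^{-x^2}$ reduces the computation to differentiating $\frac{1}{\sqrt{2}}e^{-\omega^2/4}$ twice; collecting the resulting terms yields $\frac{2-\omega^2}{4\sqrt{2}}e^{-\omega^2/4}$, which is the asserted expression. A quick check at $\omega=0$ against $\frac{1}{\sqrt{2\pi}}\int_{\mathbb{R}} x^2 e^{-x^2}\,dx = \frac{1}{2\sqrt{2}}$ confirms the normalisation.

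For $|x|e^{-|x|}$, I would exploit the evenness of the integrand to write $\mathcal{F}[|x|e^{-|x|}](\omega) = \frac{2}{\sqrt{2\pi}}\int_0^\infty x e^{-x}\cos(\omega x)\,dx$, thereby removing the non-smooth absolute value from the domain of integration. I would then evaluate the remaining integral as the real part of the elementary complex integral $\int_0^\infty x e^{-(1+i\omega)x}\,dx = (1+i\omega)^{-2}$; taking $\mathrm{Re}\,(1+i\omega)^{-2} = \frac{1-\omega^2}{(1+\omega^2)^2}$ and multiplying by the prefactor produces $\frac{2(1-\omega^2)}{\sqrt{2\pi}(1+\omega^2)^2}$, as claimed. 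As a sanity check, the same route first recovers $\mathcal{F}[e^{-|x|}](\omega)=\sqrt{2/\pi}\,(1+\omega^2)^{-1}$, and one verifies $\int_{\mathbb{R}}|x|e^{-|x|}\,dx = 2$ against the value $\frac{2}{\sqrt{2\pi}}$ of the transform at $\omega=0$.

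The computations are routine, so the only real care is in two places. First, the normalisation convention must be pinned down at the outset, since the stated prefactors ($\frac{1}{\sqrt{2}}$ and $\sqrt{2/\pi}$) are convention-dependent, and a different choice would rescale every expression. Second, for the exponential kernel the differentiation-in-frequency shortcut used in the Gaussian case is not directly available, because $|x|$ is only Lipschitz at the origin; reducing to a cosine integral over $[0,\infty)$ is the clean way to avoid distributional boundary terms. Absolute integrability of all integrands justifies both differentiating under the integral in the Gaussian case and splitting into $e^{\pm i\omega x}$ pieces in the exponential case, so I do not expect any genuine analytic obstacle beyond this bookkeeping.
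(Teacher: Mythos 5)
Your proof is correct, and both of your computations differ from the paper's in the specific device used, though the overall strategy (fix the unitary convention, start from the known Gaussian transform, reduce the exponential kernel to a cosine integral on $[0,\infty)$ by evenness) is the same. For $x^2e^{-x^2}$ the paper does not differentiate in frequency; it instead writes $k''(x)=4x^2e^{-x^2}-2k(x)$ for $k(x)=e^{-x^2}$, applies $\mathcal{F}[k'']=-\omega^2\mathcal{F}[k]$, and solves for $\mathcal{F}[x^2e^{-x^2}]$ algebraically. Your route via $\mathcal{F}[x^2f]=-\tfrac{d^2}{d\omega^2}\mathcal{F}[f]$ is the dual manipulation and is arguably more systematic, since it generalises immediately to any polynomial weight; the paper's trick is ad hoc but avoids justifying differentiation under the integral. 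For $|x|e^{-|x|}$ the paper evaluates $\int_0^\infty xe^{-x}\cos\omega x\,\mathrm{d}x$ by two integrations by parts together with the standard integrals $\int_0^\infty e^{-x}\cos\omega x\,\mathrm{d}x=\frac{1}{1+\omega^2}$ and $\int_0^\infty e^{-x}\sin\omega x\,\mathrm{d}x=\frac{\omega}{1+\omega^2}$, whereas you take $\mathrm{Re}\int_0^\infty xe^{-(1+i\omega)x}\,\mathrm{d}x=\mathrm{Re}\,(1+i\omega)^{-2}$, which is shorter and less error-prone. Your sanity checks at $\omega=0$ are a worthwhile addition the paper omits. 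One further point in your favour: the statement as printed contains a typo, $e^{-\frac{x^2}{4}}$ where $e^{-\frac{\omega^2}{4}}$ is meant (the transform must be a function of $\omega$ alone), and your computation produces the correct exponent.
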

\begin{proof}
Recall that the Fourier transform $\mathcal{F}[e^{-x^2}]$ of the Gaussian kernel $e^{-x^2}$ is given by 
\begin{equation*}
    \mathcal{F}[e^{-x^2}](\omega) = \frac{1}{\sqrt{2\pi}}\int_{-\infty}^\infty e^{-x^2}e^{-{\rm i}\omega x}{\rm d}x = \frac{1}{\sqrt{2}}e^{-\frac{\omega^2}{4}}.
\end{equation*}
Direct computation shows 
\begin{equation*}
     k''(x)=4x^2e^{-x^2}-2e^{-x^2} = 4x^2e^{-x^2}-2k(x). 
\end{equation*}
Taking Fourier transform on both sides and using the identity $\mathcal{F}[k''(x)](\omega)=-\omega^2\mathcal{F}[f(x)](\omega)$, we obtain 
\begin{equation*}
    -\omega^2\mathcal{F}[f(x)](\omega) = 4 \mathcal{F}[x^2e^{-x^2}](\omega)-2\mathcal{F}[f(x)](\omega),
\end{equation*}
which upon rearrangement gives the desired expression for  the Fourier transform $\mathcal{F}[x^2f(x)]$. Next we compute $\mathcal{F}[|x|e^{-|x|}](\omega)$:
\begin{align*}
     &\quad \mathcal{F}[|x|e^{-|x|}](\omega)=\frac{1}{\sqrt{2\pi}}\int_{-\infty}^\infty |x|e^{-|x|}e^{-{\rm i}\omega x}{\rm d}x \\ &=\frac{1}{\sqrt{2\pi}}\int_{-\infty}^\infty |x|e^{-|x|}(\cos \omega x -{\rm i}\sin \omega x){\rm d}x
    = \frac{2}{\sqrt{2\pi}}\int_0^\infty xe^{-x}\cos \omega x{\rm d}x,
\end{align*}
since the term involving $\sin\omega x$ is odd and the corresponding integral vanishes. Integration by parts twice gives 
\begin{align*}
    \int_0^\infty xe^{-x}\cos \omega x{\rm d}x &= -xe^{-x}\cos \omega x|_{x=0}^\infty + \int_0^\infty e^{-x} (\cos \omega x-\omega x\sin \omega x){\rm d}x\\ 
    & =  \int_0^\infty e^{-x} \cos \omega x{\rm d}x -\int_0^\infty\omega xe^{-x}\sin \omega x{\rm d}x\\
    & = \int_0^\infty e^{-x} \cos \omega x{\rm d}x + \omega x e^{-x}\sin \omega x|_{x=0}^\infty - \int_0^\infty e^{-x}(\omega \sin\omega x+\omega^2x\cos\omega x){\rm d}x.
\end{align*}
Rearranging the identity gives
\begin{equation*}
    \int_0^\infty xe^{-x}\cos \omega x{\rm d}x = \frac{1}{\omega^2+1}\int_0^\infty e^{-x}\cos \omega x {\rm d}x- \frac{\omega}{\omega^2+1}\int_0^\infty e^{-x}\sin \omega x{\rm d}x
\end{equation*}
This and the identities
\begin{align*}
    \int_0^\infty e^{-x}\cos \omega x {\rm d}x = \frac{1}{1+\omega^2} \quad\mbox{and}\quad 
    \int_0^\infty e^{-x}\sin \omega x {\rm d}x = \frac{\omega}{1+\omega^2},
\end{align*}
immediately imply 
\begin{equation*}
    \mathcal{F}[|x|e^{-|x|}](\omega)=\frac{2}{\sqrt{2\pi}}\int_0^\infty xe^{-x}\cos \omega x{\rm d}x = \frac{2(1-\omega^2)}{\sqrt{2\pi}(1+\omega^2)^2}.
\end{equation*}
This shows the second identity.
\end{proof}

\section{Additional experimental discussion}\label{apd:add_exp}

In this section, we provide additional empirical evaluation of the uncertainty estimates obtained with the linearised DIP.
Validating the accuracy of the uncertainty estimates is crucial for their reliable integration into downstream tasks and computer human interaction workflows, as discussed by \cite{antoran2021getting}, \cite{Bhatt2021uncertainty}, and \cite{BarbanoArridgeJinTanno:2021}.

\subsection{Evaluating approximate computations}

We validate the accuracy of our approximate computation presented in \cref{sec:computations} on the KMNIST dataset.
KMNIST is the perfect ground for this evaluation due to the fact that the low-dimensionality of $d_x$ and $d_y$ guarantees computational tractability of the inference problem, allowing us to benchmark the approximations we introduce in \cref{sec:computations}, against exact computation. In this section, if not stated otherwise, we carry out our investigations with the setting where the forward operator $\op$, comprises 20 angles, and we add $5\%$ noise to $\op x$. We repeat the analysis on 10 characters taken from the test set of the KMNIST dataset.
We assess the suitability of the Hutchinson trace estimator for the gradient of the log-determinant (\cref{subsec:cg_logdet_grads}), and the ancestral sampling for the TV-PredCP gradients (\cref{sec:predcptv_computation}).
\Cref{fig:kmnist_hyperparams_approx_vs_exact_include_predcp_False} and \cref{fig:kmnist_hyperparams_approx_vs_exact_include_predcp_True} show hyperparameter optimisation $(\sigma_y^2, \sigma^2, \ell)$ using exact and estimated gradients.
The hyperparameters trajectories match closely; we only observe tiny oscillations when using estimated gradients.
The log-determinant gradients $\frac{\partial \text{log}|\Sigma_{yy}|}{\partial \phi}$ are estimated using 10 samples, $v\sim \mathcal{N}(v; 0, P)$.
The PCG for solving ${v}^{\top} \Sigma_{\data \data}^{-1}$ uses a maximum of 50 iterations (with a early stopping criterion in place if a tolerance of $1.0$ is met). We use a randomised SVD-based preconditioner $P$ (cf. \ref{subsec:cg_logdet_grads}), where the rank, $r$, is chosen to be 200, and $P$ is updated every 100 steps.
The TV-PredCP gradients are estimated using 500 samples.

\begin{figure*}
    \centering
    \includegraphics[width=\textwidth]{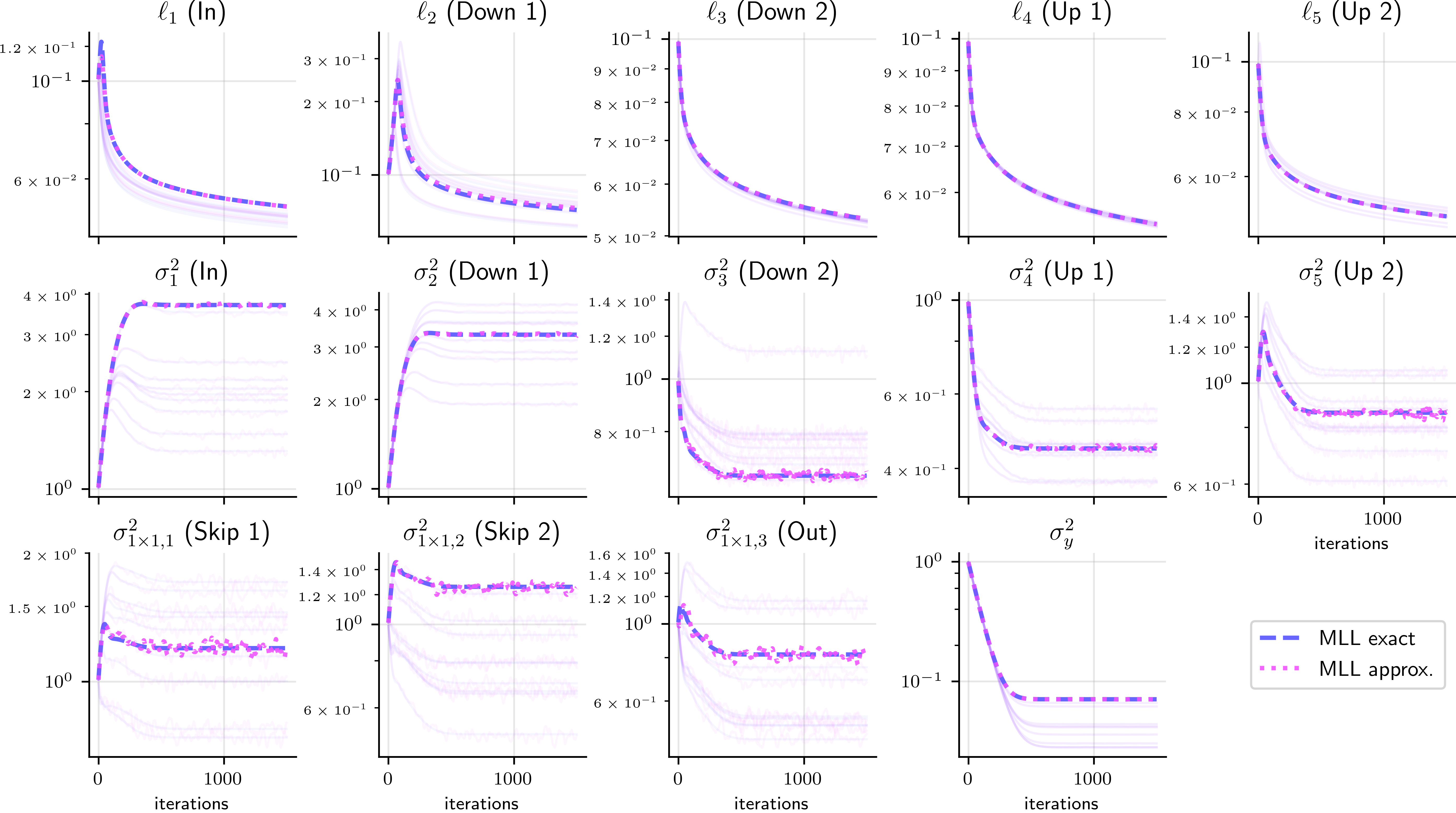}
    \caption{Hyperparameters' optimisation in \cref{eq:type2MAP} for lin.-DIP excluding PredCP (MLL), computing exact gradients as well as resorting to the approximate numerical methods discussed in \cref{subsec:cg_logdet_grads} (i.e. PCG-based log-determinant gradients) on 10 KMNIST images.}
    \label{fig:kmnist_hyperparams_approx_vs_exact_include_predcp_False}
\end{figure*}

\begin{figure*}
    \centering
    \includegraphics[width=\textwidth]{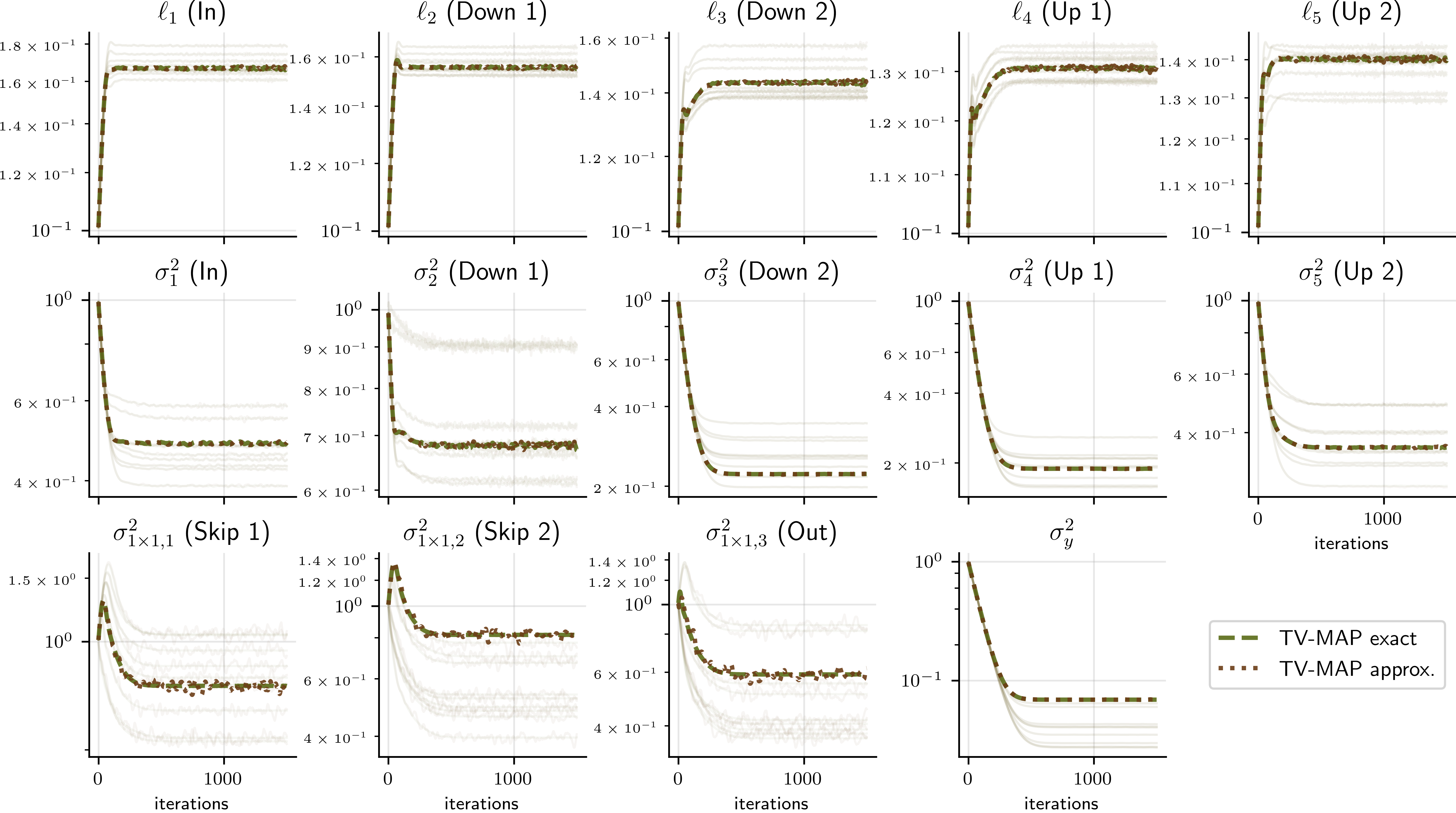}
    \caption{Hyperparameters' optimisation in \cref{eq:type2MAP} for lin.-DIP including TV-PredCP (TV-MAP), computing exact gradients as well as resorting to the approximate numerical methods discussed in \cref{subsec:cg_logdet_grads} (i.e. PCG-based log-determinant gradients) and \cref{sec:predcptv_computation} (i.e. ancestral sampling for TV-PredCP term) on 10 KMNIST images.}
    \label{fig:kmnist_hyperparams_approx_vs_exact_include_predcp_True}
\end{figure*}

\begin{figure*}
    \centering
    \includegraphics[width=0.6\textwidth]{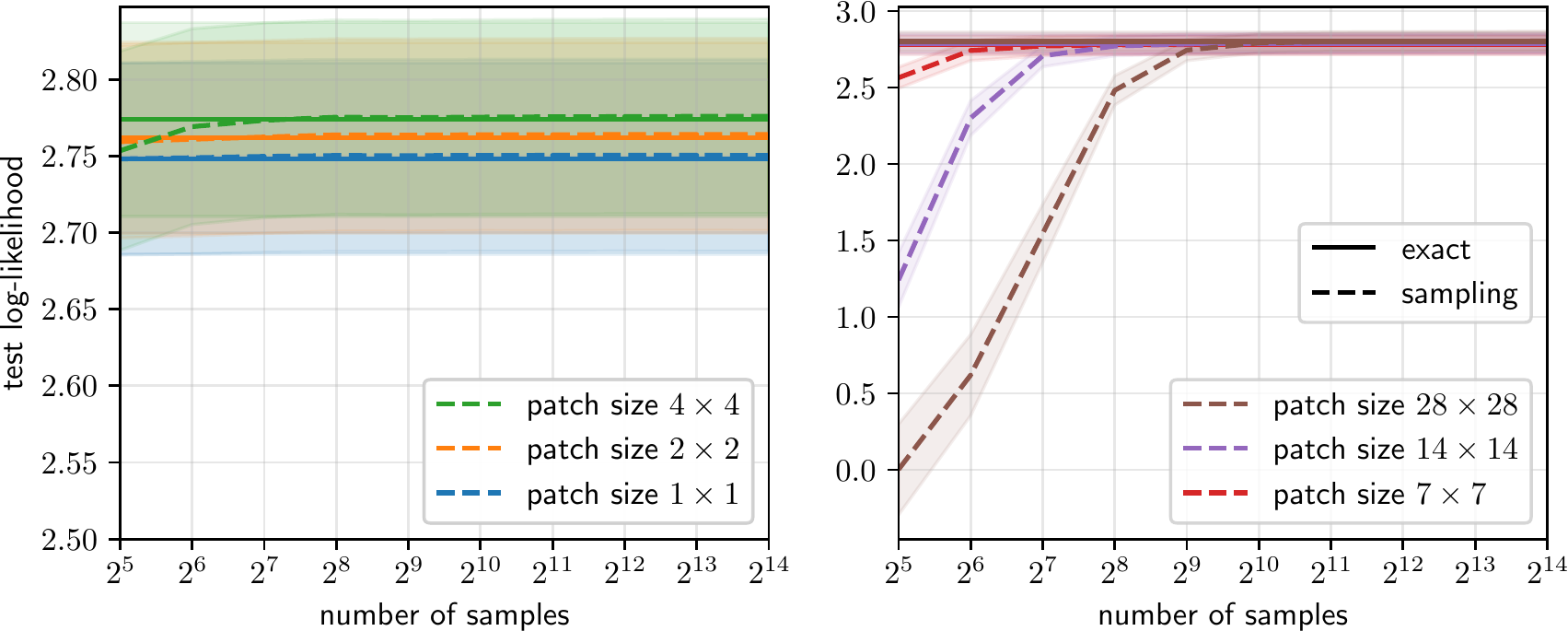}
    \caption{Test log-likelihood computed with posterior predictive covariance matrices estimated via \cref{eq:matheron}, and compared to the one obtained with exact methods (i.e. using exact posterior predictive covariance matrices via \cref{eq:CovFunc}).
    The log-likelihood is overall well approximated. As we would expect, we observe that larger patches require more samples.
    We conduct our investigation over 10 KMNIST images, and show mean and standard error. $(\sigma^2_{y}, \sigma^2, \ell)$ are obtained using MLL.}
    \label{fig:kmnist_sample_based_vs_exact}
\end{figure*}

We assess the approximations introduced in \cref{sec:post_sampling}; the accuracy of the estimation of the posterior covariance matrix, but most importantly, the estimation of the test log-likelihood. For large image sizes (e.g. the Walnut cf. \cref{subsec:walnut-exps}), it is infeasible to store the posterior predictive covariance matrix $\Sigma_{x|y} \in \mathbb{R}^{d_x\times d_x}$, which in single precision would require 250 GB of memory. However, it can be made computationally cheaper if we consider smaller image patches of pixels, neglecting the inter-patch-dependencies. This assumes the covariance matrix $\Sigma_{x|y}$ to be block diagonal. \Cref{fig:kmnist_sample_based_vs_exact} shows the effect of neglecting inter-patch-dependencies.
The log-likelihood increases with increasing patch-size (i.e.\ with more inter-dependencies being taken into account). \Cref{fig:kmnist_sample_based_vs_exact} shows how well the test log-likelihood is approximated when resorting to posterior predictive covariance matrices estimated via sampling using \cref{eq:matheron}, while sweeping across different numbers of samples and patch-sizes. As expected, estimating the log-likelihood for larger patch-sizes requires more samples.
On KMNIST, 1024 samples are sufficient for almost perfect approximation of the test log-likelihood, when approximating the posterior predictive covariance matrix with patch-size of $28\times 28$. Note that a patch-size of $28\times 28$ on KMNIST implies that no inter-patch-dependencies are neglected.

\subsection{Further discussion on KMNIST}\label{apd:KMNIST_additional_exp}

We include additional experimental figures to support the discussion about the experiments in \cref{sec:calibration_comparison}. \Cref{fig:kmnist_angles_10_noise_5_1}, \cref{fig:kmnist_angles_20_noise_5_1}, \cref{fig:kmnist_angles_10_noise_1_1}, and \cref{fig:kmnist_angles_20_noise_1_1}  are analogous to \cref{fig:main_kmnist}, yet show a KMNIST character for four different problem settings: 10 angles and 20 angles, and the two noise regimes. 

\begin{figure*}
    \centering
    \includegraphics[width=\textwidth]{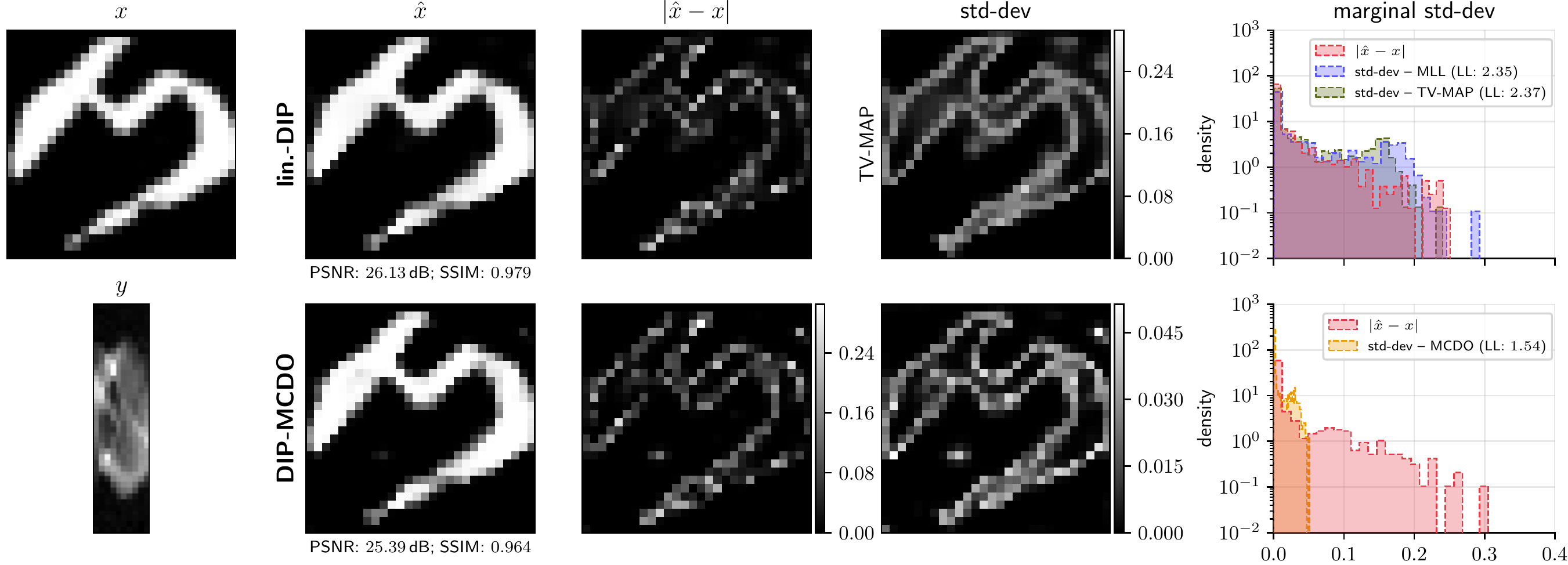}
    \caption{KMNIST character recovered from a simulated observation $\data$ (using $10$ angles and $\eta(5\%)$) with lin.-DIP, DIP-MCDO and along with their uncertainty estimates and histogram plots.}
    \label{fig:kmnist_angles_10_noise_5_1}
\end{figure*}

\begin{figure*}
    \centering
    \includegraphics[width=\textwidth]{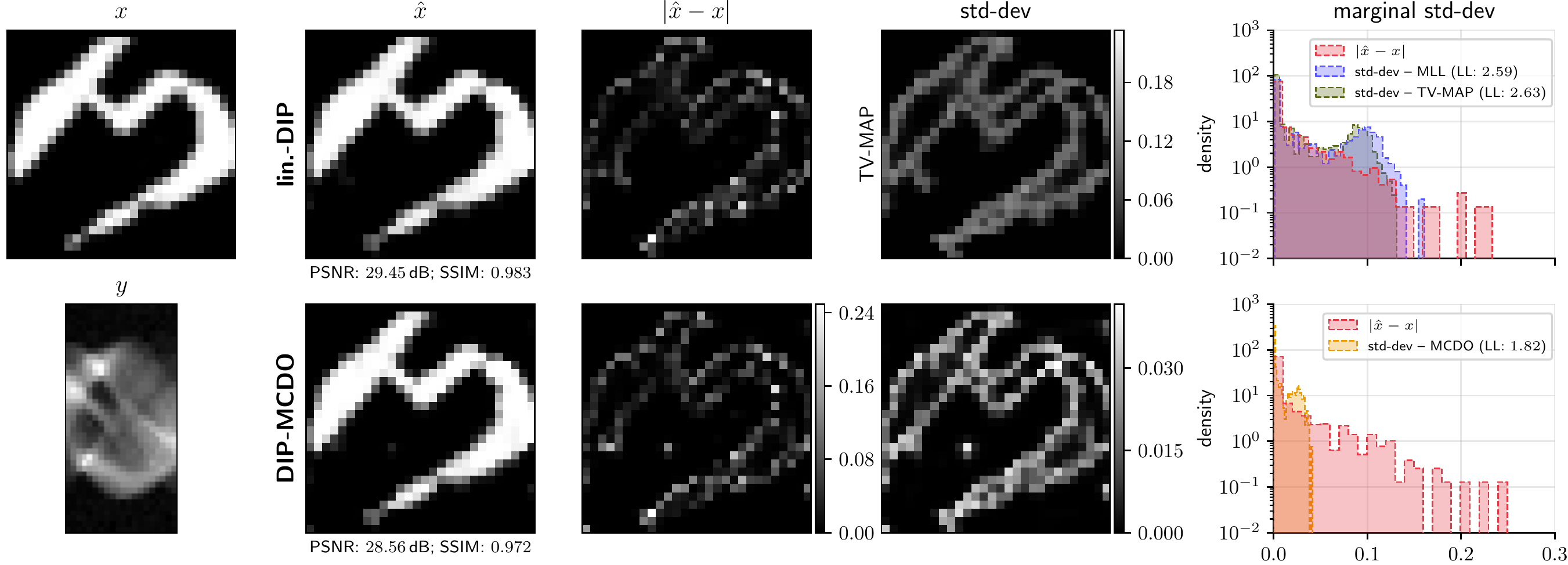}
    \caption{KMNIST character recovered from a simulated observation $\data$ (using $20$ angles and $\eta(5\%)$) with lin.-DIP, DIP-MCDO along with their uncertainty estimates and histogram plots.}
    \label{fig:kmnist_angles_20_noise_5_1}
\end{figure*}

\begin{figure*}
    \centering
    \includegraphics[width=\textwidth]{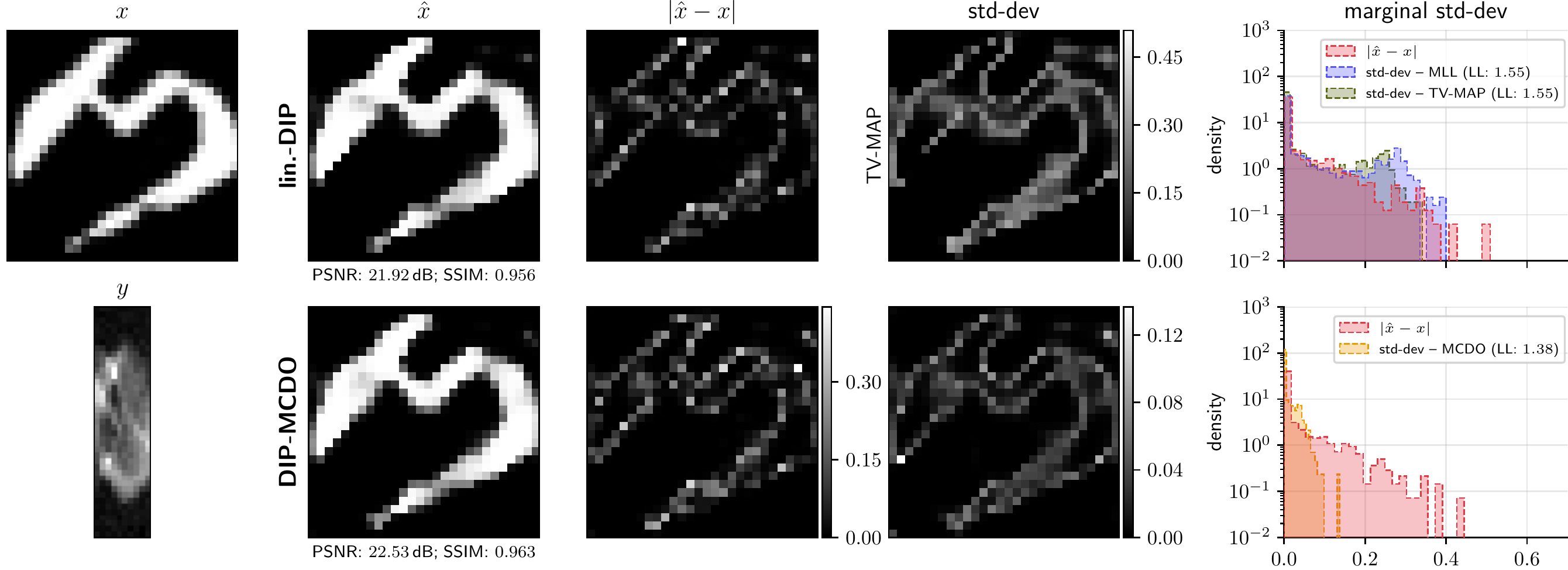}
    \caption{KMNIST character recovered from a simulated observation $\data$ (using $10$ angles and $\eta(10\%)$) with lin.-DIP, DIP-MCDO along with their uncertainty estimates and histogram plots.}
    \label{fig:kmnist_angles_10_noise_1_1}
\end{figure*}

\begin{figure*}
    \centering
    \includegraphics[width=\textwidth]{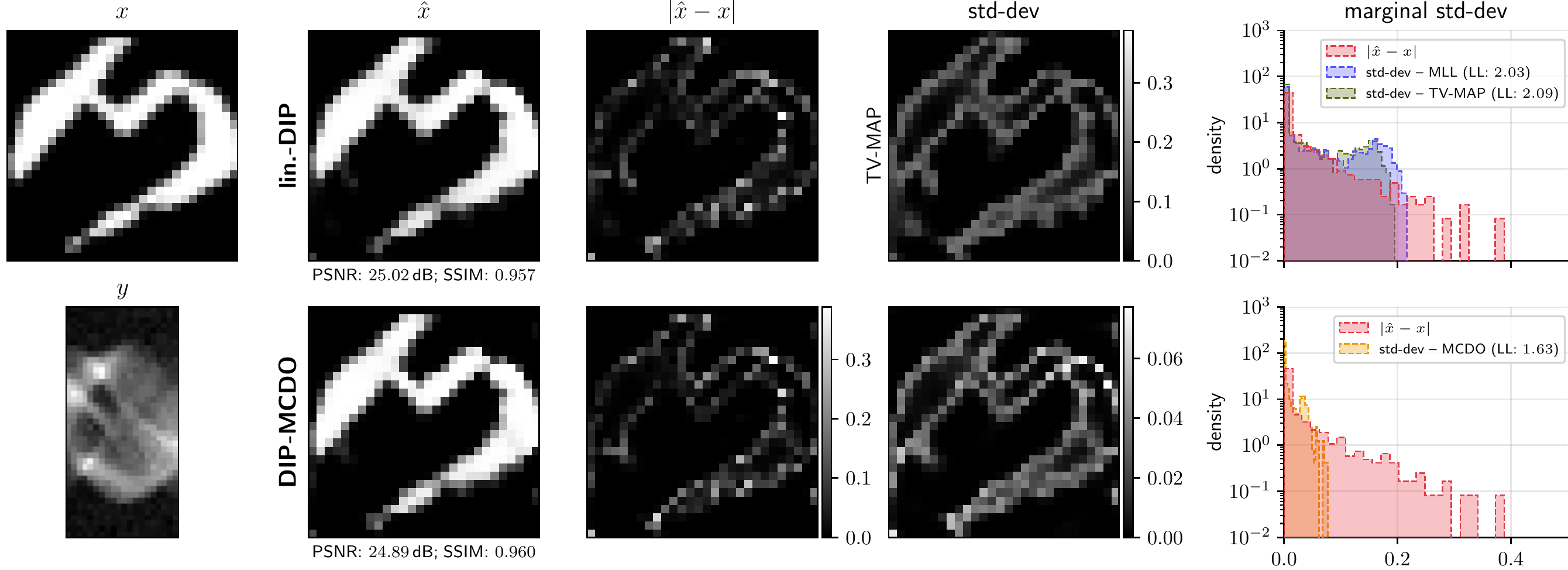}
    \caption{KMNIST character recovered from a simulated observation $\data$ (using $20$ angles and $\eta(10\%)$) with lin.-DIP, DIP-MCDO along with their uncertainty estimates and calibration plots.}
    \label{fig:kmnist_angles_20_noise_1_1}
\end{figure*}

\begin{figure*}
    \centering
    \includegraphics[width=\textwidth]{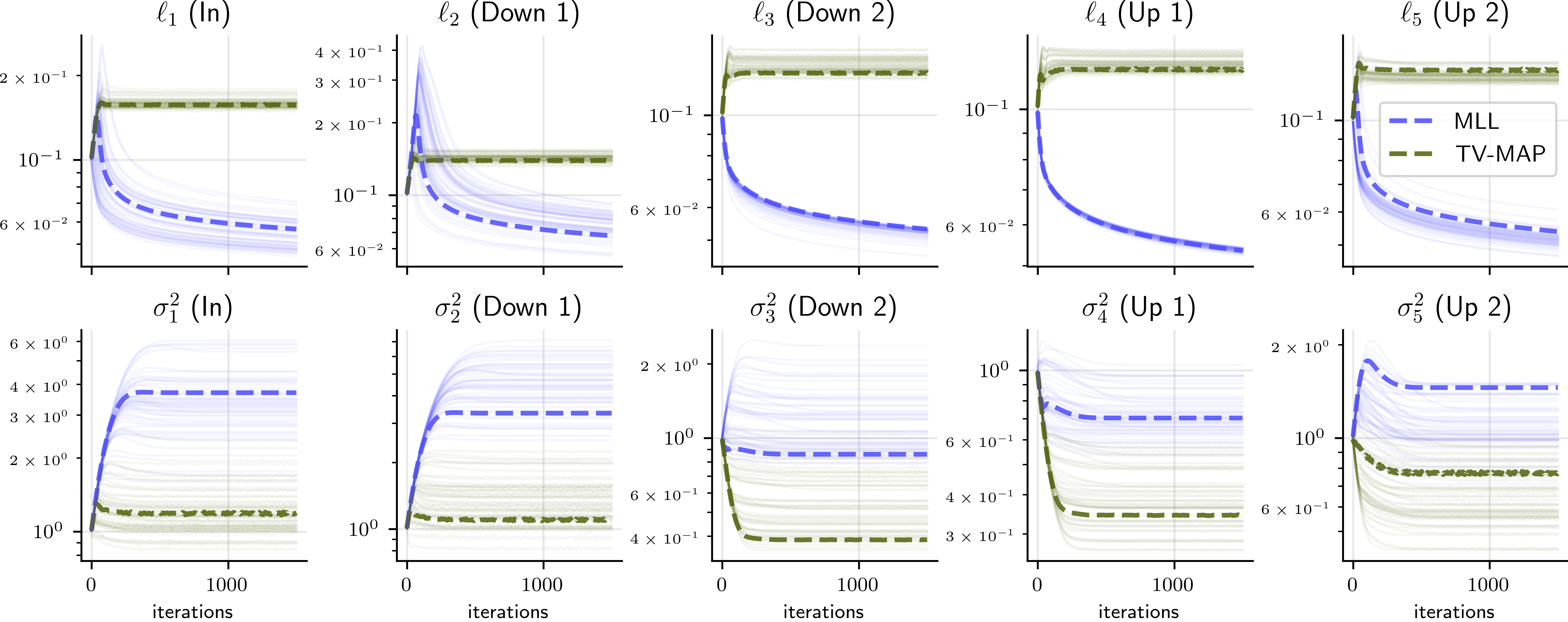}
    \caption{
    Optimisation of ($\ell, \sigma^{2}$) via MLL and Type-II MAP for $3\times 3$ convolution layers belonging to the small U-Net used for KMNIST. 
    Thicker dotted lines refer to the optimisation of the exemplary reconstruction shown in \cref{fig:main_kmnist} while transparent lines correspond to other KMNIST images.The TV-PredCP leads to larger prior lengthscales $\ell$ and lower variances $\sigma^{2}$.
    }
    \label{fig:KMNIST_hyperparams_gp_priors}
\end{figure*}

\Cref{fig:KMNIST_hyperparams_gp_priors} and \cref{fig:KMNIST_hyperparams_normal_priors} show the hyperparameters' optimisation via Type-II MAP and MLL outlined in \cref{subsec:marginal_likelihood}. 
The use of our TV-PredCP prior leads to smaller marginal variances and larger lengthscales. 
This restricts our prior over reconstructions to smooth functions.
The TV-PredCP introduces additional constraints into the model by encouraging the prior to contract (stronger parameter correlations and smaller posterior predictive marginal variances. In turn, this results in a more contracted posterior, which we observe as a larger Hessian determinant.

\begin{figure*}
    \centering
    \includegraphics[width=0.85\textwidth]{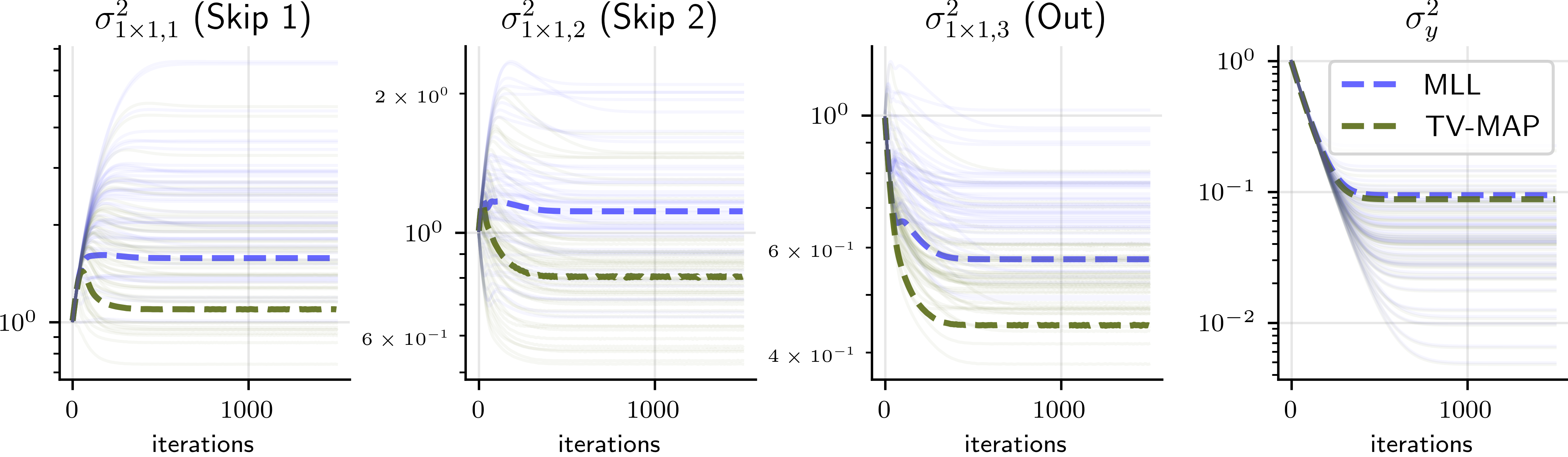}
    \caption{
    Hyperparameters' optimisation via MLL and Type-II MAP for $1\times1$ convolution layers belonging to the small U-Net used for KMNIST, along with $\sigma_y^{2}$. 
    Thicker dotted lines refer to the optimisation of the KMNIST image shown in \cref{fig:main_kmnist}, while transparent lines correspond to other KMNIST images.}
    \label{fig:KMNIST_hyperparams_normal_priors}
\end{figure*}

For the KMIST dataset, one may question whether TV is an ideal regulariser. The TV regulariser enforces sparsity in the local image gradients. A TV regulariser is highly recommended when we observe sparsity in the edges present in an image, especially when the edges constitute a small fraction of the overall image pixels. That is often the case in high-resolution medical images or natural images.
Intuitively, the higher the resolution of the image is, the higher the sparsity level of the edges is. However, in the KMIST dataset, due to the low resolution of the images, the edges constitute a considerable fraction of the total pixels.
Therefore, a TV regulariser could be sub-optimal. In the KMNIST dataset, it is difficult to distinguish (in TV sense) what is part of the image structure and what is part of the background. The stroke is only a few pixels wide, and ground-truth pixel values are generated through interpolation \cite{clanuwat2018deep}. Indeed we observe a larger gain from selecting hyperparamerters using Type2-MAP (instead of MLL) for the real-measured high-resolution Walnut data than for KMNIST.

Furthermore, some KMNIST images present spurious high valued pixels away from the region containing the handwritten character. This contradicts the modelling assumption in \cref{eq:inverse_problem} which assumes $x$ is noiseless. Our likelihood function from \cref{eq:Model_weight_space} is defined over the space of observations $\data$ and thus can not account for noise in $x$.
We translate the uncertainty induced by the observation noise to the space of images by computing the conditional log-likelihood Hessian with respect to $x$: $-\frac{\partial^2 \log p(\data|x)}{\partial x^2} =\sigma^{-2}_{y}\op^{\top}\op \in \BR^{d_{x} \times d_{x}}$. This matrix is of rank at most $d_{y}$, which potentially can be much smaller than $d_{x}$ due to the ill-conditioning of the reconstruction problem, and therefore cannot act as a proper Gaussian precision matrix on its own. We incorporate the noise uncertainty from the observation subspace into the image space by adding the mean of the diagonal of the pseudoinverse $ \sigma^{2}_{y}(\op^{\top}\op)^{\dagger}$ to the marginal variances of the predictive distribution.
This can also be seen as placing a Gaussian likelihood over reconstruction space, which can be marginalised to recover the predictive distribution $p(x|\data) = \int \mathcal{N}(x; \hat x, \sigma^{2}_{y}\text{Tr}((\op^{\top}\op)^{\dagger}) d^{-1}_{x} \mI) \mathcal{N}(\params; \hat\params, \Sigma_{\params|\data})\,{\rm d}\params = \mathcal{N}(x; \hat x,  \mJ \Sigma_{\params|\data} \mJ^{\top} + \sigma^{2}_{y}\text{Tr}((\op^{\top}\op)^{\dagger}) d^{-1}_{x} \mI)$.

\subsection{Further discussions on $\mu$CT Walnut data}\label{apd:walnut_additional_exp}

We include additional figures to support the discussion about the experiments in \cref{subsec:walnut-exps}. 
We evaluate the effect of the TV-PredCP prior for hyperparameter optimisation. We observe that this prior leads to a slightly less heavy tailed standard deviation histogram. It presents slightly better agreement with the empirical reconstruction error, resulting in a lager log-likelihood.
\Cref{fig:walnut_hyperparams_gp_priors} and \cref{fig:walnut_hyperparams_normal_priors} show the optimisation of the hyperparameters ($\sigma_y^{2}$, $\ell$, $\bsigma^{2}_{\params}$) using the method in \cref{subsec:marginal_likelihood} and approximate computations in \cref{sec:computations}. For both MLL and Type2-MAP learning, the marginal variance for all CNN blocks except the two closest to the output goes to $\approx$0. 
This is due to the representations from these last layer being able to explain the data well on their own. 
The our hyperparameter objectives are thus able to eliminate previous layers from our probabilistic model, simplifying it without sacrificing reconstruction quality.  We did not observe this for KMNIST data, possibly because of our use of a smaller, less overparametrised network without any spare capacity.

\begin{figure*}
    \centering
    \includegraphics[width=\textwidth]{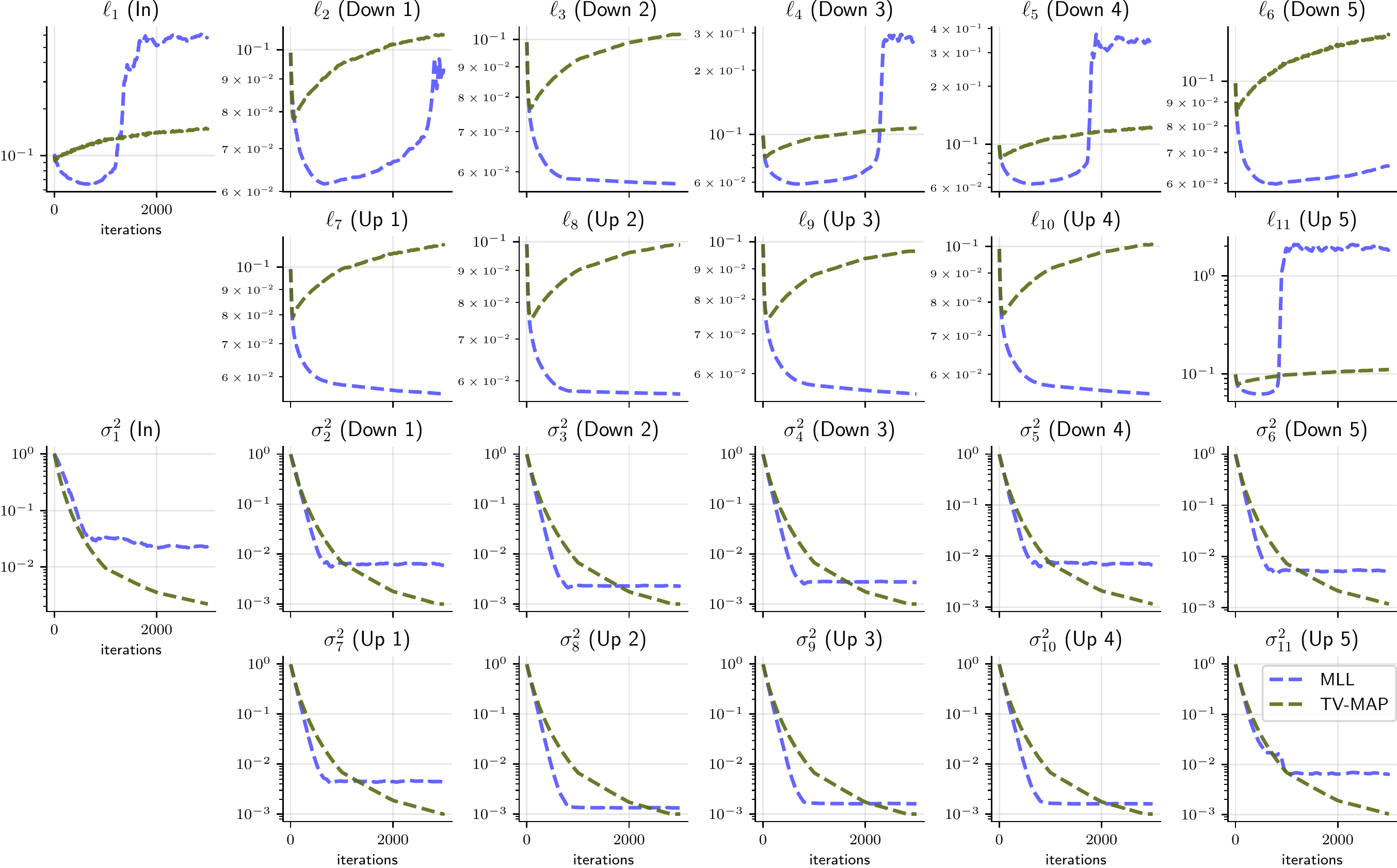}
    \caption{Optimisation of ($\ell, \sigma^{2}$) via MLL and Type-II MAP for $3 \times 3$ convolution layers for the Walnut data described in \cref{subsec:walnut-exps}. As in \cref{fig:KMNIST_hyperparams_gp_priors}, the TV-PredCP leads to larger prior lengthscales $\ell$ and lower variances $\sigma^{2}$.}
    \label{fig:walnut_hyperparams_gp_priors}
\end{figure*}

\begin{figure*}
    \centering
    \includegraphics[width=0.75\textwidth]{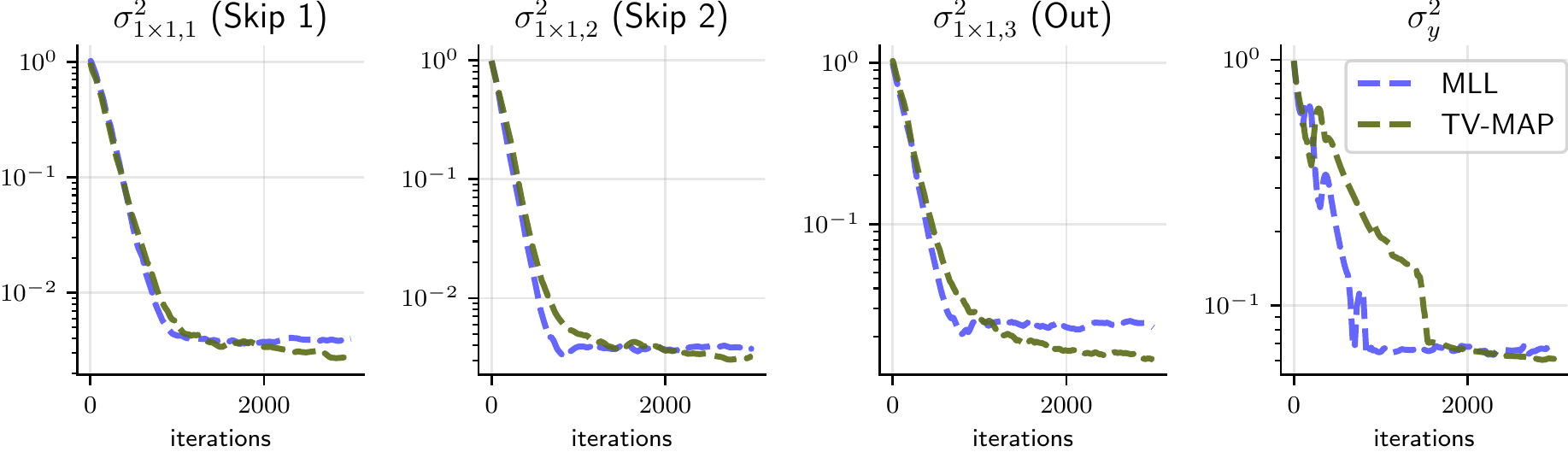}
    \caption{Optimisation of ($\sigma_y^{2}$, $\sigma^{2}$) via MLL and Type-II MAP for $1 \times 1$ convolutions.}
    \label{fig:walnut_hyperparams_normal_priors}
\end{figure*}

\section{Additional experimental setup details}\label{apd:exp_setup}

\subsection{Setup for KMNIST experiments}\label{apd:add_setup_kmnist}

We use a down-sized version of U-Net \cite{ronneberger2015u}, cf. \cref{fig:unet-diagram-kmnist}, as the reduced output dimension $d_{x}$ and the simplicity of the problem allow us to employ a shallow architecture without compromising the reconstruction quality. This problem is computationally tractable removing the need for the approximations described in \cref{sec:computations}.
We reduce the U-Net architecture in \cref{fig:unet-diagram} to 3 scales and 32 channels at each scale, remove group-normalisation layers and use a sigmoid activation for the output. A filtered back-projection reconstruction from $\data$ is used as the network input.

\begin{figure}[h]
    \centering
    \includegraphics[width=0.6\columnwidth]{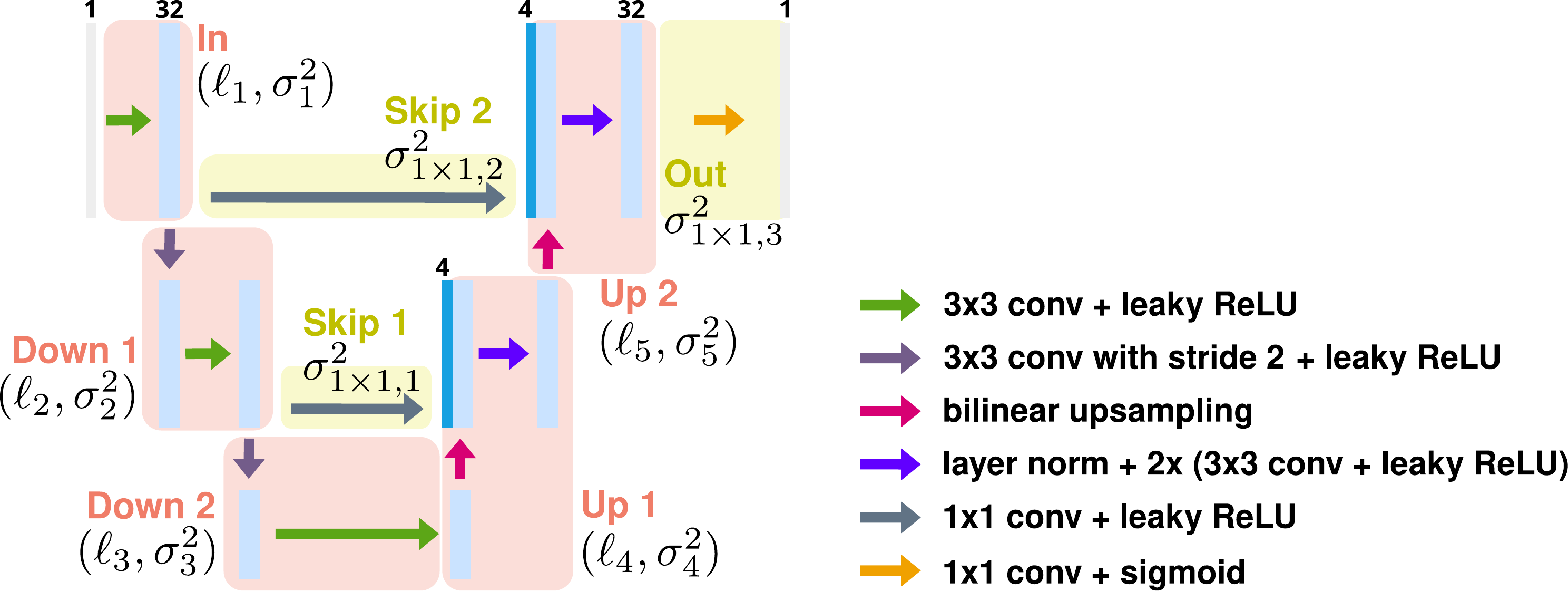}
    \vspace{-0.5em}
    \caption{A schematic illustration of the reduced U-Net architecture used in the numerical experiments on KMNIST data. It has 3 scales and does not include group norm layers. Each light-blue rectangle corresponds to a multi-channel feature map. We highlight the architectural components corresponding to each block for which a separate prior is defined with red boxes.}
    \vspace{-0.75em}
    \label{fig:unet-diagram-kmnist}
\end{figure}

\Cref{tab:hparams_kmnist} lists the hyperparameters of DIP optimisation for each setting. These values were found by grid-search on 50 KMNIST training images.
The dropout rate $p$ of DIP-MCDO is set to $0.05$.

\begin{table}[H]
\centering
\caption{Hyperparameters of DIP optimisation selected using 50 randomly chosen images from the KMNIST training set. The $\lambda$ values refer to our implementation of \cref{eq:DIP_MAP-obj2} in which $\Vert\cdot\Vert^2$ is replaced with mean squared error (or the regularisation term is up-scaled by $d_{x}$).}
\label{tab:hparams_kmnist}
\begin{tabular}{rrrrrrcrrrr}
 & & \multicolumn{4}{c}{5\% noise} & & \multicolumn{4}{c}{10\% noise}\\
$\#$angles & & 5 & 10 & 20 & 30 & & 5 & 10 & 20 & 30\\
\hline
TV scaling for DIP: $\lambda$ & & $\rm 1e{-}5$ & $\rm 3e{-}5$ & $\rm 1e{-}4$& $\rm 1e{-}4$ & & $\rm 3e{-}5$ & $\rm 1e{-}4$ & $\rm 3e{-}4$ & $\rm 3e{-}4$\\
DIP iterations & & 14\,000 & 29\,000 & 41\,000 & 50\,000 & & 7400 & 13\,000 & 17\,000 & 22\,000\\
\hline
\end{tabular}%
\end{table}

\subsection{Setup for X-ray Walnut data experiments}\label{apd:add_setup_walnut}

In \cite{der_sarkissian2019walnuts_data} projection data sets obtained with three different source positions are provided for 42 walnuts, as well as high-quality reconstructions of size $501^3$\,px obtained via iterative reconstruction using the measurements from all three source positions.
We consider the task of reconstructing a single slice of size $501^2$ of the first walnut from a sub-sampled set of measurements using the second source position, which corresponds to a sparse fan-beam-like geometry. From the original 1200 projections (equally distributed over $360^\circ$) of size $972\times 768$ we first select the appropriate detector row matching the slice position (which varies for different detector columns and angles due to a tilt in the setup), yielding measurement data of size $1200\cdot 768$. We then sub-sample in both angle and column dimensions by factors of $20$ and $6$, respectively, leaving $d_y = 60\cdot 128 = 7680$ measurements. For evaluation metrics, we take the corresponding slice from the provided high-quality reconstruction as the reference ground truth image $ x$. The sparse operator matrix $\op$ is assembled by calling the forward projection routine of the \texttt{ASTRA} toolbox \cite{aarle2015astra} for every standard basis vector, $\op = \op [e_{1}, e_{2},\,...\, e_{d_{x}}]$.
While especially for large data dimensions it would be favourable to directly use the matrix-free implementations from the toolbox, we also need to evaluate the transposed operation ${v}_{y}^{\top} \op$, which would be only approximately matched by the back-projection routine (especially for the tilted 2D sub-geometry, which would require padding). Therefore, we resort to the sparse matrix multiplication via PyTorch.

The network architecture is shown in \cref{fig:unet-diagram}. Following the approach in \cite{barbano2021deep}, we pretrain the network to perform post-processing of filtered back-projection (FBP) reconstructions on synthetic data. The dataset consists of pairs of images containing random ellipses, and corresponding FBPs from observations simulated according to \cref{eq:inverse_problem} with $5\%$ noise. The supervised pretraining accelerates the convergence of the subsequent unsupervised DIP phase for reconstruction from $\data$. In the DIP phase, the FBP of $\data$ is used as the network input. \Cref{tab:hparams_walnut} lists the hyperparameters of DIP optimisation.
The dropout rate $p$ of DIP-MCDO is set to $0.05$.

After DIP optimisation, following \cite{antoran2022Adapting} the network weights are refined for the linearised model (\cref{eq:linearised_model}). 
We optimise the same loss function as for DIP, but with the linear model \cref{eq:linearised_model} instead of the network model, for 1000 steps. This yields network weights that fit better the subsequent MLL / Type-II-MAP optimisation \cref{eq:type2MAP}, which employs the linear model.

\begin{table}[H]
\centering
\caption{Hyperparameters of DIP optimisation used for the walnut data. The $\lambda$ value refers to our implementation of \cref{eq:DIP_MAP-obj2} in which $\Vert\cdot\Vert^2$ is replaced with mean squared error (or the regularisation term is upscaled by $d_{x}$).}
\label{tab:hparams_walnut}
\small
\begin{tabular}{rr}
\hline
TV scaling for DIP: $\lambda$ & $\rm 6.5e{-}6$ \\
DIP iterations (after pretraining) & 1500\\
\hline
\end{tabular}
\end{table}

In MLL / Type-II-MAP optimisation \cref{eq:type2MAP}, we use 10 probes to estimate the gradients of the log-determinant $\log |\Sigma_{\data \data}|$ \cref{eq:cg_logdet_grad}, employing the PCG method for solving ${v}^{\top} \Sigma_{\data \data}^{-1}$ using a maximum of 50 steps with a randomised SVD-based preconditioner $P$ of rank 200 that is updated every 100 steps.
The TV-PredCP gradients \cref{eq:tv_grad_est,eq:tv_grad_est_2} are estimated using 20 samples. The MLL / Type-II-MAP optimisation is run for 3000 iterations.

The posterior predictive covariance matrices for all methods are estimated drawing 4096 zero-mean samples and computing empirical posterior predictive covariance matrix. The latter is done for patch-sizes from $1\times 1$ up to $10\times 10$ image patches. We use a stabilising heuristic for the estimated covariance matrices, inspired by \cite{Maddox2019Simple}: by letting $\tilde\Sigma_{x|y} \leftarrow \alpha\Sigma_{x|y} + (1-\alpha)\diag(\diag( \Sigma_{x|y}))$, $\alpha=\frac{1}{2}$, the impact of the off-diagonal entries is reduced. Note that our Gaussian assumption is correct in the case of linearised DIP but not for MCDO. However, MCDO does not provide a closed form density over the reconstructed image, only samples. The dimensionality of the reconstruction is too large for exact density estimation on real-measured data. We thus compute the log-likelihood in the same way as for the linearised DIP, i.e.\ via a Gaussian distribution with mean and posterior predictive covariance matrices estimated from samples. The accelerated sampling method via $\tilde J$ \& PCG uses a randomised SVD-based 500-rank approximation $\tilde J$ of the Jacobian, and PCG for solving ${v}^{\top} \Sigma_{\data \data}^{-1}$ with a maximum of 50 steps along with a randomised SVD-based preconditioner of rank 400. This sampling variant can be performed in single precision (32 bit floating point). Thus constructing $\tilde J$ is actually much faster than reported in \cref{tab:wall-clock} (0.5\,min instead of 0.2\,h).

\end{document}